
\documentclass[12pt]{article}

\topmargin=-1,5cm
\textwidth=15cm
\textheight=22cm
\oddsidemargin=2,7cm
\evensidemargin=2,7cm

\hoffset=-2,5cm

\usepackage{amsmath}
\numberwithin{equation}{section}
\usepackage{amsfonts}
\usepackage{amsthm}
\usepackage{graphicx}
\usepackage[utf8]{inputenc}
\inputencoding{latin1}
\usepackage[usenames,dvipsnames]{color}
\usepackage[table]{xcolor}
\usepackage{multirow}
\usepackage{array}
\usepackage{float}
\floatstyle{ruled}
\newfloat{algorithm}{tbp}{loa}
\floatname{algorithm}{Algorithm}
\usepackage{caption}
\usepackage{subcaption}

\renewcommand\epsilon\varepsilon

\newtheorem{remark}{Remark}[section]
\newtheorem{theorem}[remark] {Theorem}

\newcounter{noqed}
\renewcommand{\qed}{\ifmmode\text{ }\fi\rule[-.05em]{.3em}{.7em}
\setcounter{noqed}{0}}
\renewenvironment{proof}[1][{}]{
  \noindent{\bf Proof#1.}\setcounter{noqed}{1}
}{
  \ifnum\value{noqed}=1\qed\fi\par\medskip
} 

\renewcommand\theta\vartheta

\setlength{\fboxrule}{0.2mm}
\setlength{\fboxsep}{2mm}


\begin{document}

\title{A Network Model characterized by\\ a Latent Attribute Structure
  with Competition} 

\author{Paolo Boldi\footnote{Dipartimento di
      Informatica, Universit\`{a} degli Studi di Milano, Via Comelico
      39/41 - 20135 Milano, Italy}, Irene Crimaldi\footnote{IMT
    Institute for Advanced Studies Lucca, Piazza San Ponziano 6,
    I-55100 Lucca, Italy}, Corrado Monti\footnote{Dipartimento di
      Informatica, Universit\`{a} degli Studi di Milano, Via Comelico
      39/41 - 20135 Milano, Italy}}

\maketitle

\begin{abstract}
The quest for a model that is able to explain, describe, analyze and
simulate real-world complex networks is of uttermost practical, as well
as theo\-re\-ti\-cal, interest.  In this paper we introduce and study a
network model that is based on a latent attribute structure: each node
is characterized by a number of features and the probability of the
existence of an edge between two nodes depends on the features they
share. Features are chosen according to a process of Indian-Buffet
type but with an additional random ``fitness'' parameter attached to
each node, that determines its ability to transmit its own features to
other nodes.  As a consequence, a node's connectivity does not depend on
its age alone, so also ``young'' nodes are able to compete and succeed 
in acquiring links.  One of the advantages of our model for the latent
bipartite ``node-attribute'' network is that it depends on few
parameters with a straightforward interpretation.  We provide some
theoretical, as well experimental, results regarding the power-law
behavior of the model and the estimation of the parameters. By
experimental data, we also show how the proposed model for the
attribute structure naturally captures most local and global
properties (e.g., degree distributions, connectivity and distance
distributions) real networks exhibit.\\

\noindent{\em keyword:}
Complex network, social network, attribute matrix, Indian Buffet process 
\end{abstract}



\section{Introduction}

Complex networks are a unifying theme that emerged in the last decades
as one of the most important topics in many areas of science; the
starting point is the observation that many networks arising from
different types of interactions (e.g., in biology, physics, chemistry,
economics, technology, on-line social activity) exhibit surprising
similarities that are partly still unexplained. The quest for a model
that is able to explain, describe, analyze and simulate those
real-world complex networks is of uttermost practical, as well as
theoretical, interest.

The classical probabilistic model of graphs by Erd\H{o}s and
R\'enyi~\cite{ErdosRenyi} soon revealed itself unfit to describe
complex networks because, for example, it fails to produce a power-law
degree distribution. One of the first attempts to try to obtain more
realistic models was~\cite{BarabasiAlbert}, where the idea of
\emph{preferential attachment} was first introduced: nodes tend to
attach themselves more easily to other nodes that are already very
popular, i.e. with an high number of links. Similar models were
proposed by~\cite{AielloEtal} and~\cite{KumarEtal}. The general
approach of these and other attempts is to produce probabilistic
frameworks (typically with one or more parameters) giving rise to
networks with statistical properties that are compatible with the ones
that are observed in real-world graphs: degree distribution is just
one example; other properties are degree-degree correlation,
clustering coefficients, distance distribution, 
etc.~\cite{general-sna-book}.

The task of modeling the network is often undertaken
directly~\cite{BarabasiAlbert,KumarModel}, but recently some authors
proposed to split it into two steps (see,
e.g.,~\cite{latapy_bipartite,lattanzi_affiliation}). This proposal
stems from the observation that many complex networks contain two
types of entities: actors on one hand, and groups (or features) on the
other; every actor belongs to one, or more, groups (or can exhibit
one, or more, features), and the common membership to groups (or the
sharing of features) determines a relation between actors. The idea of
an underlying {\em bipartite network} such that interpersonal
connections follow from inter-group connections, derives from
sociology; a seminal paper presented by Breiger \cite{breiger} in 1974
described this dualism between ``persons and groups''. This idea has
been proved precious in social networks and their mathematical
modelization \cite{lattanzi_affiliation}.

In particular, many authors \cite{MGJ2009} distinguish between two
kinds of models: class-based models -- such as \cite{blockmodel} --
assume that every node belongs to a single class, while feature-based
models use many features to describe each node.  A well-known
shortcoming of the first is the proliferation of classes, since
dividing a class according to a new feature leads to two different
classes.  To overcome this limitation, classical class-based models
have been extended to allow mixed membership, like
in~\cite{mixed-membership}.  Feature-based models naturally assume
this possibility.  Within them, some authors (such as
\cite{latent-factor-models}) propose real-valued vectors to associate
features to nodes; others instead assume only binary features, in
which a node either exhibits a feature or it does not (see
e.g.~\cite{MGJ2009}). This assumption is simple and natural, and it
significantly simplifies the analysis of the model.

A natural model for the evolution of such binary bipartite graphs
comes from Bayesian statistics and it is known as the Indian Buffet
process, introduced by Griffiths \emph{et al.} \cite{gha, GG06, GG11}
and, subsequently extended and studied by many authors \cite{bcpr-ibm,
  BJP, TG, TJ}.  The process defines a plausible way for features to
evolve, always according to a \emph{rich-get-richer} principle:
because of this, it represents a promising model for affiliation
networks.  Since the Indian Buffet process provides {\em a prior
  distribution} in Bayesian statistics, these models have been used to
reconstruct affiliation networks, with an unknown number of features, 
from data where only friendship relations between actors are
available. An important work in this direction is
\cite{MGJ2009}. However, the standard Indian Buffet process has a
drawback as a model for real networks: the exchangeability assumption
is often untenable in applications.

In this paper we propose and analyze a model that combines two features
cha\-rac\-te\-ri\-zing the evolution of a network:
\begin{enumerate}
 \item Behind the adjacency matrix of a network there is a {\em latent
   attribute structure} of the nodes, in the sense that each node is
   characterized by a number of features and the probability of the
   existence of an edge between two nodes depends on the features they
   share. In other words, the adjacency matrix of a network hides a
   bipartite network describing the attributes of the nodes.

\item Not all nodes are equally successful in transmitting their own
  attributes to the new nodes. Each node $n$ is characterized by a
  {\em random fitness parameter} $R_n$ describing its ability to transmit
  the node's attributes: the greater the value of the random variable
  $R_n$, the greater the probability that a feature of $n$ will also
  be a feature of a new node, and so the greater the probability of
  the creation of an edge between $n$ and the new node.  Consequently,
  a node's connectivity does not depend on its age alone (so also
  ``young'' nodes are able to compete and succeed in acquiring
  links). We refer to this aspect as {\em competition}. 
 \end{enumerate}

We shape the first aspect by the definition of a model which connects
the pair of attribute-vectors of two nodes, say $i$ and $j$, to the
probability of the existence of an edge between $i$ and $j$. Other
examples, which are related to the Bayesian framework based on the
standard Indian Buffet model, can be found in \cite{MGJ2009, msh,
  palla-et-al, SCJ}.

We model the second aspect by the definition of a stochastic dynamics
for a bipartite ``node-attribute'' network, where the probability that
a new node exhibits a certain attribute depends on the ability,
represented by some random fitness parameter, of the previous nodes
possessing that attribute in transmitting it. It is worthwhile to
underline that in our model, as in the standard Indian Buffet process,
the collection of attributes is potentially unbounded. Thus, we do not
need to specify a maximum number of latent attributes {\it a priori}.

We were inspired by the recent generalization of the Indian buffet
process presented in \cite{bcpr-ibm}. However, the model presented
here is in some sense simpler since the parameters (that will be
introduced and analyzed in the next sections) play a role that is clearer
and more intuitive. Specifically, we have two parameters ($\alpha$
and $\beta$) that control the number of new attributes each new node
exhibits (in particular $\beta>0$ tunes the power-law behaviour of the
overall number of different observed attributes), whereas the random
fitness parameters $R_i$ impact on the probability of the new nodes to
inherit the attributes of the previous nodes. With respect to the
model in \cite{bcpr-ibm}, we lose some mathematical properties, but we
will show that some important results still hold true and they allow
us to estimate the parameters and, in particular, the exponent of the
power-law behavior.

Regarding the use of fitness parameters, we recall the work by
Bianconi and Barab\'asi \cite{BB} that introduced some fitness
parameters describing the ability of the nodes to compete for links.
The difference between their model and ours consists in the fact that
in \cite{BB} the fitness parameters appear explicitly in the
edge-probabilities; while in our model they affect the evolution of
the attribute matrix and then play an implicit role in the evolution
of the connections.

Summing up, the present work have different aims: firstly, we propose
a simple model for the latent bipartite ``node-attribute'' network,
where the role played by the single parameters is straightforward and
easy to be interpreted; secondly, dif\-fe\-ren\-tly from other network
models based on the standard Indian Buffet process, we take into
account the aspect of competition and, like in \cite{bcpr-ibm}, we
introduce random fitness parameters so that nodes have a different
relevance in transmitting their features to the next nodes; finally,
we provide some theoretical, as well experimental, results regarding
the power-law behavior of the model and the estimation of the
parameters. By experimental data, we will also show how the proposed
attribute structure naturally leads to a complex network model.

The paper is structured as follows. In Section~\ref{sec-model}, we
introduce a model for the evolution of the attribute matrix and we
provide theoretical results and tools regarding the estimation and the
analysis of the quantities characterizing the model. These methods are
then tested by simulations in Section~\ref{sec:sim-Z}. In order to
produce a graph out of the attribute structure, in
Section~\ref{sec:graph} we illustrate different models for the
edge-probabilities that are based on the attribute matrix. The
properties of the generated graphs are studied by simulation in
Section~\ref{sec:graph-simulations}. Finally, in
Section~\ref{real-data} we analyze a real dataset and, then, in
Section~\ref{conclusions} we sum up the main novelties and merits of
our work and we illustrate some possible future lines of research.

\section{A model for the evolution of the attribute matrix}
\label{sec-model}

We assume that the nodes enter the network sequentially so that node
$i$ represents the node that comes into the network at time $i$. Let
$\mathcal X$ be an unbounded collection of possible attributes that a
node can exhibit. (This means that we do not specify the total number
of possible attributes {\it a priori}.)  Each node is assumed to have
only a finite number of attributes and different nodes can share one
or more attributes.

Let $Z$ be a binary bipartite network where each row $Z_n$ represents
the attributes of the node $n$: $Z_{n,k}=1$ if node $n$ has attribute
$k$, $Z_{n,k}=0$ otherwise. We assume that each $Z_n$ remains
unchanged in time, in the sense that every node decides its own
features (attributes) when it arrives and then it will never change them
thereafter. This assumption is quite natural in many contexts, e.g. 
in genetics.

\smallskip
In all the sequel we postulate that $Z$ is left-ordered. This means
that in the first row the columns for which $Z_{1,k}=1$ are grouped on
the left and so, if the first node has $N_1$ features, then the
columns of $Z$ with index $k\in\{1,\dots, N_1\}$ represent these
features. The second node could have some features in common with the
first node (those corresponding to indices $k$ such that $k=1,\dots,
N_1$ and $Z_{2,k}=1$) and some, say $N_2$, new features. The latter are
grouped on the right of the sets for which $Z_{1,k}=1$, i.e., the
columns of $Z$ with index $k\in\{N_1+1,\dots, N_2\}$ represent the new
features brought by the second node. This grouping structure persists
throughout the matrix $Z$.

\smallskip
Here is an example of a $Z$ matrix with $n=4$ nodes; in gray we show
the new features adopted by each node ($N_1=3$, $N_2=2$, $N_3=3$,
$N_4=2$ in this example); observe that, for every node $i$, the $i$-th
row contains 1's for all the columns with indices $k \in
\{N_1+\dots+N_{i-1}+1,\dots,N_1+\dots+N_i\}$ (they represent the new
features brought by $i$); moreover some elements of the columns with
indices $k \in \{1,\dots,N_1+\dots+N_{i-1}\}$ are also $1$'s (features
brought by previous nodes and that also node $i$ decided to adopt):
\[
	\left(
	\begin{array}{cccccccccccc}
	\cellcolor{gray!20}1 & \cellcolor{gray!20}1 & \cellcolor{gray!20}1 
& 0 & 0 & 0
	& 0 & 0 & 0 & 0 \\
	1 & 0 & 1 & \cellcolor{gray!20}1 & \cellcolor{gray!20}1 & 0 & 0 & 0 
& 0 & 0 \\
	0 & 1 & 1 & 1 & 0 & \cellcolor{gray!20}1 & \cellcolor{gray!20}1 &
	\cellcolor{gray!20}1 & 0 & 0 \\
	1 & 1 & 1 & 0 & 1 & 1 & 0 & 1 & \cellcolor{gray!20}1 & 
\cellcolor{gray!20}1 \\
	\end{array}
	\right)
\]

\smallskip
We will describe the dynamics using a culinary metaphor (similarly to
what some authors do for other models, see Chinese Restaurant
\cite{PIT}, Indian Buffet process \cite{GG06, GG11, TG} and their
generalizations \cite{BCL, bcpr-ibm}). We identify the nodes with the
customers of a restaurant and the attributes with the dishes, so that
the dishes tried by a customer represent the attributes that a node
exhibits.\\

\medskip
Fix $\alpha>0$ and $\beta\in (-\infty,1]$.  Also, let Poi$(\lambda)$
  denote the Poisson distribution with mean $\lambda> 0$.  Customer
  (node) $n$ is attached a random weight (that we call, in accordance
  with the usage in Network Theory, {\em fitness parameter}) $R_n$. We
  assume that each $R_n$ is independent of $R_1,\dots,R_{n-1}$ and of
  the dishes (features) experimented by customers $1,\dots, n$. The
  fitness parameter $R_n$ affects the choices of the future customers
  (those after $n$), while the choices of customer $n$ are affected by
  the fitness parameters and the choices of the previous ones. Indeed,
  it may be the case that different customers have different
  relevance, due to some random cause, that does not affect their
  choices but is relevant to the choices of future customers (i.e.,
  their capacity of being followed).
  
  The dynamics is as follows.  Customer (node) 1 tries $N_1$ dishes
  (features), where $N_1$ is Poi$(\alpha)$-distributed. For each
  $n\geq 1$, let $S_n$ be the collection of dishes (features)
  experimented by the first $n$ customers (nodes).  For the customers
  following the first one, we have that:
\begin{itemize}
\item Customer $n+1$ selects a subset $S_n^*$ of $S_n$. Each
  $k\in S_n$ is included or not into $S_n^*$ independently of the
  other members of $S_n$. The inclusion probability is
\begin{equation}
P_n(k)=\frac{\sum_{i=1}^n R_i Z_{i,k}}{\sum_{i=1}^n R_i}\,,
\label{eq:define_inclusion_prob}
\end{equation} 
where $Z_{i,k}=1$ if $\{$customer $i$ has selected dish $k\}$ and
$Z_{i,k}=0$ otherwise. It is a preferential attachment rule: the
larger the weight of a dish $k$ at time $n$ (given by the numerator of
(\ref{eq:define_inclusion_prob}), i.e., the total value of the random
variables $R_i$ associated to the customers that have chosen it until
time $n$), the greater the probability that it will be chosen by the
future customer $n+1$.
\footnote{As we will discuss in the conclusions
  (Sec.~\ref{conclusions}), we can generalize our model by introducing
  another parameter $c\geq 0$ in the inclusion probabilities so that
\begin{equation*}
P_n(k)=\frac{\sum_{i=1}^n R_i Z_{i,k}}{c+\sum_{i=1}^n R_i}\,.
\end{equation*} 
For the moment, we set $c=0$. Note that this choice implies $P_n(k)=1$
for all $n$ and $k=1,\dots, N_1$. Therefore, we could consider the
first node and its features ``fictitious'', in the sense that the
``true'' dynamics is for $n\geq 2$ and $k\geq N_1+1$.  }
\item In addition to $S_n^*$, customer $n+1$ also tries $N_{n+1}$
  new dishes, where $N_{n+1}$ is Poi$(\Lambda_n)$-distributed with
\begin{equation}
\Lambda_n=\frac{\alpha}{\left(\sum_{i=1}^n R_i\right)^{1-\beta}}.
\label{eq:define_lambda}
\end{equation}
\end{itemize}
For each $k$ in $S_{n+1}$, the matrix element $Z_{n+1,k}$ is set equal
to $1$ if customer $n+1$ has selected dish $k$, equal to zero
otherwise.

Besides the assumption of independence, we also assume that the random
parameters $R_n$ are identically distributed with $R_n\geq v$ for
each $n$ and a certain number $v>0$, and $E[R_n^2]<+\infty$.

We set $E[R_n]=m_R$ and $L_n=card(S_n)=\sum_{i=1}^n N_i$, i.e.
\begin{equation*}
\begin{split}
L_n&=\hbox{overall number of different dishes experimented by the first }
n \hbox{ customers}\\
&=\hbox{overall number of different observed attributes for the first } n 
\hbox{ nodes}.
\end{split}
\end{equation*}
In the previous example, we have $L_1=3,L_2=5,L_3=8,L_4=10$.

The meaning of the parameters is the following. The random fitness
parameters $R_n$ fundamentally control the probability of transmitting the
attributes to the new nodes. 
The main effect of $\beta$ is that it regulates the asymptotic
behavior of the random variable $L_n$ (see Theorem \ref{th-L}). In
particular, $\beta>0$ is the power-law exponent of $L_n$.  The main
effect of $\alpha$ is the following: the larger $\alpha$, the larger
the total number of new tried dishes by a customer (and so the larger
the total number of $1$'s in a row of the binary matrix $Z$). It is
worth to note that $\beta$ fits the asymptotic behaviour of $L_n$
(in particular, the power-law exponent of $L_n$) and, separately, 
$\alpha$ fits the number of observed features.

The mathematical formalization of the above model can be performed by means
of random measures~\cite{KING} with atoms corresponding to the tried
dishes (observed attributes), similarly to~\cite{bcpr-ibm, BJP,
  TJ}. More precisely, besides the sequence of positive real random
variables $(R_n)$, we can define a sequence of random measures
$(M_n)$, such that each $M_{n+1}$ is, conditionally on the past
$(M_i,R_i: i\leq n)$, a Bernoulli random measure with a hazard measure
$\nu_n$, having a discrete part related to the points $k$ in $S_n$ and
their weights $P_n(k)$ and a diffuse part with total mass equal to
$\Lambda_n$.

\subsection{Theoretical results regarding the estimation of the parameters 
$\alpha$ and~$\beta$}

In this section we prove some properties regarding the asymptotic
behavior of $L_n$. In particular, the first result shows a
logarithmic behavior for $\beta=0$ and a power-law behavior for
$\beta\in (0,1]$. These results allow us to define suitable estimators
  for $\beta$ and $\alpha$.

\begin{theorem}\label{th-L}
Using the previous notation, the
following statements hold true:
\begin{itemize}
\item{a)} $\sup_n L_n=L<+\infty$ a.s. for $\beta<0$;
\item{b)} ${L_n}/{\ln(n)}\stackrel{a.s.}\longrightarrow 
\alpha/m_R$ for $\beta =0$;
\item{c)} ${L_n}/{n^{\beta}}\stackrel{a.s.}\longrightarrow
{\alpha}/{(\beta\, m_R^{1-\beta})}$ for $\beta\in (0,1]$.
\end{itemize}
\end{theorem}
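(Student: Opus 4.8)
The plan is to reduce the study of $L_n$ to that of its conditional mean given the fitness sequence, and then to determine the asymptotics of that mean via the strong law of large numbers (SLLN) applied to the partial sums $T_n=\sum_{i=1}^n R_i$. The key structural observation is that the number $N_{n+1}$ of \emph{new} dishes brought by customer $n+1$ is $\mathrm{Poi}(\Lambda_n)$ with $\Lambda_n=\alpha\,T_n^{\beta-1}$, and that $\Lambda_n$ depends only on $R_1,\dots,R_n$, not on the selection of old dishes. Hence, conditionally on the whole fitness sequence $(R_i)_{i\ge1}$, the variables $(N_{j+1})_{j\ge1}$ are \emph{independent} Poisson variables with the now deterministic means $\Lambda_j$. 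Since $L_n=N_1+\sum_{j=1}^{n-1}N_{j+1}$ is nondecreasing in $n$, I would carry out the whole argument on the conditional probability space, for a.e. realization of $(R_i)$, working on the full-measure event where $T_n/n\to m_R$ guaranteed by the i.i.d. SLLN, and then pass back to the unconditional statement by integrating the conditional probabilities.

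Set $A_n=\sum_{j=1}^{n-1}\Lambda_j$, the conditional mean of $L_n-N_1$. The concentration step is to prove $L_n/A_n\to1$ a.s. whenever $A_n\to\infty$. Writing $S_n=\sum_{j=1}^{n-1}(N_{j+1}-\Lambda_j)$, the summands are conditionally independent, centered, with conditional variance $\Lambda_j$ (Poisson). Taking $b_j=A_{j+1}$ as normalization, Kolmogorov's convergence criterion reduces to checking $\sum_{j}\Lambda_j/A_{j+1}^2<\infty$, which follows by telescoping from $\Lambda_j/A_{j+1}^2=(A_{j+1}-A_j)/A_{j+1}^2\le 1/A_j-1/A_{j+1}$ (the first term being $1/\Lambda_1<\infty$). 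Then $\sum_j (N_{j+1}-\Lambda_j)/A_{j+1}$ converges a.s. and Kronecker's lemma yields $S_n/A_n\to0$, i.e. $L_n\sim A_n$. For item (a), $\beta<0$, the normalization no longer diverges: here $R_i\ge v$ forces $\Lambda_j\le \alpha (vj)^{\beta-1}$ with $\sum_j j^{\beta-1}<\infty$, so $A_\infty<\infty$; then $E[\sum_j N_{j+1}\mid(R_i)]=A_\infty<\infty$ forces $\sum_j N_{j+1}<\infty$ a.s., whence $\sup_n L_n=L<\infty$ a.s.

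It remains to compute the asymptotics of $A_n$ for $\beta\in[0,1]$, which is where the i.i.d. SLLN enters. On the event $T_j/(m_R j)\to1$ I would write $\Lambda_j=\alpha (m_R j)^{\beta-1}\,(m_R j/T_j)^{1-\beta}$, so that $A_n=\frac{\alpha}{m_R^{1-\beta}}\sum_{j=1}^{n-1}\frac{y_j}{j^{1-\beta}}$ with $y_j=(m_R j/T_j)^{1-\beta}\to1$. A Toeplitz/Abelian summation argument (a weighted average of a convergent sequence, with positive weights whose partial sums diverge, converges to the same limit) then transfers the asymptotics of $\sum_j j^{\beta-1}$ to $A_n$: for $\beta=0$ this sum is harmonic, $\sum_{j\le n}1/j\sim\ln n$, giving $A_n\sim(\alpha/m_R)\ln n$ and hence item (b); for $\beta\in(0,1]$ one has $\sum_{j\le n}j^{\beta-1}\sim n^\beta/\beta$, giving $A_n\sim \alpha n^\beta/(\beta m_R^{1-\beta})$ and hence item (c) (the case $\beta=1$ is degenerate and immediate, since $\Lambda_j\equiv\alpha$). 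Combining with $L_n\sim A_n$ closes the argument.

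I expect the delicate point to be the bookkeeping around the two simultaneous almost-sure limits: the a.s. convergence of $T_n/n$ in the fitness randomness, and the conditional a.s. concentration of $L_n$ around $A_n$ in the dish randomness. Making these interact rigorously requires conditioning on $(R_i)$, establishing a conditional-a.s. statement valid for a.e. fitness trajectory, and then lifting it to an unconditional a.s. statement; I would also need to verify that the normalization $A_n$, itself a random function of the fitness, can be replaced by its deterministic equivalent $\ln n$ or $n^\beta$ with the stated constants without disturbing the limit, which is precisely what the Toeplitz step secures.
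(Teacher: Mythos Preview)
Your proof is correct and follows the same overall scheme as the paper's: decompose $L_n$ into its compensator $\sum_{i\le n}\Lambda_{i-1}$ plus a centered fluctuation, control the fluctuation by a series-convergence/Kronecker argument, and identify the limit of the compensator via the SLLN for $\overline R_n\to m_R$ together with the elementary asymptotics of $\sum_{j\le n}j^{\beta-1}$. The differences are organizational rather than conceptual. For the concentration step the paper works directly with the $L^2$-bounded martingale $\sum_i(N_i-\Lambda_{i-1})/a_i(\beta)$ with respect to the natural filtration $\mathcal F_i$, using the crude bound $E[\Lambda_i]=O(i^{\beta-1})$ (from $R_i\ge v$) to sum the variances; you instead condition on the whole fitness sequence to make the $N_{j+1}$ genuinely independent, and normalize by the \emph{random} increasing sequence $b_j=A_{j+1}$, which lets you verify $\sum_j\Lambda_j/A_{j+1}^2<\infty$ by a clean telescoping bound that does not require knowing the growth of $\Lambda_j$ in advance. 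Both routes feed into Kronecker's lemma in the same way. For part~(a) the paper uses Borel--Cantelli on $\{N_i\ge1\}$, while your argument (finite conditional mean of $\sum_j N_{j+1}$ forces finiteness a.s.) is an equally short alternative. Your final worry about interleaving the two a.s.\ limits is handled exactly as you say: establish the conditional-a.s.\ statement for a.e.\ fitness trajectory and integrate.
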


\begin{proof}  
Let us prove assertion a), first.  Let ${\cal F}_i$ be the natural
$\sigma$-field associated to the model until time $i$ and set
$\Lambda_0=\alpha$. Since, conditionally on $\mathcal{F}_i$, the
distribution of $N_{i+1}$ is Poi$(\Lambda_i)$, we have
\begin{gather*}
P(N_{i+1}\geq 1)=E\bigl[P(N_{i+1}\geq
1\mid\mathcal{F}_i)\bigr]
\leq E[\Lambda_i].
\end{gather*}
Since $R_i\geq v>0$, we obtain
$$
\sum_i P(N_{i+1}\geq 1)\leq \alpha \sum_i \frac{1}{(vi)^{(1-\beta)}}<+\infty
$$ (where the convergence of the series is due to the assumption
$\beta<0$).  By the Borel-Cantelli lemma, we conclude that
$$
P\bigl(N_i>0 \hbox{ infinitely often }\bigr)
=P\bigl(N_i\geq 1 \hbox{ infinitely often}\bigr)
=0.
$$
Hence, if $\beta<0$, there is a random index $N$ such that $L_n=L_N$ a.s. 
for all $n\geq N$, which concludes the proof of a). 

\indent The assertion c) is trivial for $\beta=1$ since, in
this case, $L_n$ is the sum of $n$ independent random variables with
distribution ${\mathcal P}(\alpha)$ and so, by the classical strong
law of large numbers, $L_n/n\stackrel{a.s.}\longrightarrow\alpha$. 

\indent Now, let us prove assertions b) and c) for $\beta\in
        [0,1)$. Define
\begin{gather*}
\lambda(\beta)=\frac{\alpha}{m_R}\,\text{ if }\,\beta=0\quad\text{and}
\quad\lambda(\beta)=\frac{\alpha}{\beta\,m_R^{1-\beta}}\,\,\text{ if }\,
\beta\in (0,1),
\\a_n(\beta)=\log{n}\,\text{ if }\,\beta=0\quad\text{and}\quad 
a_n(\beta)=n^\beta\,\text{ if }\,\beta\in (0,1).
\end{gather*}
We need to prove that
\begin{gather*}
\frac{L_n}{a_n(\beta)}\overset{a.s.}\longrightarrow\lambda(\beta).
\end{gather*}
First, we observe that we can write 
\begin{gather*}
\frac{\sum_{i=1}^{n-1}\Lambda_i}{a_n(\beta)}
=
\alpha\frac{\sum_{i=1}^{n-1}i^{\beta-1}\,\bigl(\overline{R}_i\bigr)^{\beta-1}}
{a_n(\beta)},
\end{gather*}
where, by the strong law of the large numbers, 
$$
\overline{R}_i=\frac{\sum_{j=1}^i R_j}{i} \stackrel{a.s}\longrightarrow m_R.
$$ Therefore, since $\sum_{i=1}^{n-1}i^{\beta-1}/a_n(\beta)$ converges
to $1$ when $\beta=0$ and to $1/\beta$ when $\beta\in (0,1)$, we get
\begin{equation}\label{lim0}
\frac{\sum_{i=1}^{n-1}\Lambda_i}{a_n(\beta)}\overset{a.s.}
\longrightarrow\lambda(\beta). 
\end{equation} 
Next, let 
us define
\begin{gather*}
T_0=0\quad\text{and}\quad 
T_n=\sum_{i=1}^n\frac{N_i-E[N_i\mid\mathcal{F}_{i-1}]}{a_i(\beta)}=
\sum_{i=1}^n\frac{N_i-\Lambda_{i-1}}{a_i(\beta)}.
\end{gather*}
Then, $(T_n)$ is a martingale with respect to $(\mathcal{F}_n)$ and
\begin{gather*}
E[T_n^2]=
\sum_{i=1}^n\frac{E\bigl[(N_i-\Lambda_{i-1})^2\bigr]}{a_i(\beta)^2}=
\sum_{i=1}^n\frac{E\bigl\{E\bigl[(N_i-\Lambda_{i-1})^2\mid\mathcal{F}_{i-1}\bigr]
\bigr\}}{a_i(\beta)^2}=\sum_{i=1}^n\frac{E[\Lambda_{i-1}]}{a_i(\beta)^2}.
\end{gather*}
Since $R_i\geq v>0$, it is easy to verify that 
$E[\Lambda_i]=\text{O}(i^{-(1-\beta)})$ and so 
$\sup_n E[T_n^2]=\sum_{i=1}^\infty\frac{E[\Lambda_{i-1}]}{a_i(\beta)^2}<\infty$. 
Thus, $(T_n)$ converges a.s., and the Kronecker's lemma implies
$$
\frac{1}{a_n(\beta)}\sum_{i=1}^n a_i(\beta)\frac{(N_i-\Lambda_{i-1})}{a_i(\beta)}
\stackrel{a.s.}\longrightarrow 0,
$$
so finally 
\begin{gather}\label{lim}
\lim_n\frac{L_n}{a_n(\beta)}=
\lim_n\frac{\sum_{i=1}^n N_i}{a_n(\beta)}=
\lim_n\frac{\sum_{i=1}^n\Lambda_{i-1}}{a_n(\beta)}=
\lim_n\frac{\Lambda_0+\sum_{i=1}^{n-1}\Lambda_i}{a_n(\beta)}=
\lambda(\beta)\quad\text{a.s.}
\end{gather}
\end{proof}

The above result entails that $\ln(L_n)/ \ln(n)$ is a strongly
consistent estimator of $\beta\in[0,1]$. In fact:
\begin{itemize}
  \item if $\beta=0$ then
    $L_n\stackrel{a.s.}\sim\frac{\alpha}{m_R}\ln(n)$ as $n\to
    +\infty$; hence $\ln(L_n)\stackrel{a.s.}\sim\ln(\alpha/m_R) +
    \ln(\ln(n))$, therefore
    $\ln(L_n)/\ln(n)\stackrel{a.s.}\sim\ln(\alpha/m_R)/\ln(n) +
    \ln(\ln(n))/\ln(n)\stackrel{a.s.}\to 0=\beta$;
  \item if $\beta>0$, we have $L_n\stackrel{a.s.}\sim \lambda(\beta)
    n^{\beta}$ as $n\to +\infty$ so
    $\ln(L_n)\stackrel{a.s.}\sim\ln(\lambda(\beta)) + \beta \ln(n)$,
    hence
    $\ln(L_n)/\ln(n)\stackrel{a.s.}\sim\ln(\lambda(\beta))/\ln(n) +
    \beta\stackrel{a.s.}\to \beta$.
\end{itemize} 

\begin{remark}\label{remark-estimator-beta} 
\rm In practice, the value of $\ln(L_n)/\ln(n)$ may be quite far from
the limit value $\beta$ when $n$ is small.  Hence, it may be worth
trying to fit the power-law dependence of $L_n$ as a function of $n$
with standard techniques~\cite{CSNPLDED} and use the slope
$\widehat\beta_n$ of the regression line in the log-log plot as an
effective estimator for $\beta$.
\end{remark}

Finally, assuming that $\beta\in [0,1]$ and $m_R$ are known, 
we can get a strongly consistent estimator of
$\alpha$, as:
$$
m_R\,\frac{L_n}{\ln(n)}
\quad\hbox{for }\beta =0
\qquad\hbox{and}\qquad 
m_R^{1-\beta}\beta\,\frac{L_n}{n^{\beta}}
\quad\hbox{for } 0<\beta\leq 1.
$$ 

In practice, we assume $\beta$ equal to the estimated
value $\widehat{\beta}_n$ (as defined before) and we take $m_R$
equal to the estimated value $\overline{R}_n=\sum_{i=1}^n R_i/n$, if
the random parameters $R_i$ are known. In
Section~\ref{sec:monte-carlo-experiments}, we will discuss the case when
the random variables $R_i$ are unknown.

\begin{remark}\label{remark-estimator-alpha} 
\rm Once more, it may be better in practice to estimate
$\alpha$ as 
\begin{equation}\label{stima-alpha}
\begin{split}
\widehat\alpha_n&=m_R\,\widehat\gamma_n\qquad\hbox{when } \beta=0\\
\widehat\alpha_n&={\beta}\,m_R^{1-\beta}\,
\widehat\gamma_n \qquad\hbox{when } 0<\beta\leq 1,
\end{split}
\end{equation}
where $\widehat\gamma_n$ is the slope of the regression line in the
plot $\big(\ln(n), L_n\big)$ or in the plot $\big(n^{\beta}, L_n\big)$
according to whether $\beta=0$ or $\beta\in(0,1]$.
\end{remark}

We complete this section with a central-limit theorem that gives the
rate of convergence of $L_n/a_n(\beta)$ to $\lambda(\beta)$ when
$\beta\in [0,1]$.

\begin{theorem} 
If $\beta\in [0,1]$, then we have the following convergence in
distribution\footnote{Actually, the convergence is in the sense of
  the {\em stable} convergence, which is stronger than the convergence
  in distribution. Indeed, stable convergence is a form of convergence
  intermediate between convergence in distribution and convergence in
  probability.}:
\begin{gather*}
\sqrt{a_n(\beta)}\,\Bigl\{\frac{L_n}{a_n(\beta)}-\lambda(\beta)\Bigr\}
\stackrel{d}\longrightarrow\mathcal{N}\bigl(0,\,\lambda(\beta)\bigr).
\end{gather*}
\end{theorem}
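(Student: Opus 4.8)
The plan is to split $L_n$ into its compensator and the associated martingale, to extract the Gaussian fluctuation from the martingale part via a central limit theorem for martingale difference arrays, and to show that the compensator contributes no fluctuation at the scale $\sqrt{a_n(\beta)}$. Recall from the proof of Theorem~\ref{th-L} that $L_n=\sum_{i=1}^n N_i=\sum_{i=1}^n\Lambda_{i-1}+M_n$ with $M_n=\sum_{i=1}^n(N_i-\Lambda_{i-1})$ an $(\mathcal F_n)$-martingale and $\Lambda_0=\alpha$. Accordingly I would write
\[
\sqrt{a_n(\beta)}\,\Bigl\{\tfrac{L_n}{a_n(\beta)}-\lambda(\beta)\Bigr\}
=\underbrace{\frac{\sum_{i=1}^n\Lambda_{i-1}-a_n(\beta)\lambda(\beta)}{\sqrt{a_n(\beta)}}}_{=:A_n}
\;+\;\underbrace{\frac{M_n}{\sqrt{a_n(\beta)}}}_{=:B_n},
\]
so that, by Slutsky's theorem, it suffices to prove $A_n\to 0$ in probability and $B_n\stackrel{d}\longrightarrow\mathcal N\bigl(0,\lambda(\beta)\bigr)$ (stably).

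For the martingale term, I would view $X_{n,i}=(N_i-\Lambda_{i-1})/\sqrt{a_n(\beta)}$, $i=1,\dots,n$, as a triangular array nested in $(\mathcal F_i)$ and apply a standard martingale central limit theorem (in its stable version). Since $N_i$ is, conditionally on $\mathcal F_{i-1}$, Poisson with mean $\Lambda_{i-1}$, its conditional variance equals $\Lambda_{i-1}$, so the predictable quadratic variation is $\sum_{i=1}^n E[X_{n,i}^2\mid\mathcal F_{i-1}]=a_n(\beta)^{-1}\sum_{i=1}^n\Lambda_{i-1}$, which converges a.s.\ to $\lambda(\beta)$ by \eqref{lim0}. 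For the conditional Lindeberg condition I would use a Lyapunov-type bound with fourth moments: the fourth central moment of a Poisson$(\Lambda_{i-1})$ variable is $\Lambda_{i-1}+3\Lambda_{i-1}^2$, so $\sum_{i=1}^n E[X_{n,i}^4\mid\mathcal F_{i-1}]=a_n(\beta)^{-2}\sum_{i=1}^n(\Lambda_{i-1}+3\Lambda_{i-1}^2)$. The first sum is $\mathrm O(a_n(\beta))$, hence gives $\mathrm O(a_n(\beta)^{-1})\to 0$; and since $\Lambda_{i-1}=\mathrm O(i^{\beta-1})$ one checks $\sum_{i=1}^n\Lambda_{i-1}^2=\mathrm o(a_n(\beta)^2)$ in every case $\beta\in[0,1]$, so the second sum vanishes too. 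This yields $B_n\stackrel{d}\longrightarrow\mathcal N(0,\lambda(\beta))$, and the convergence is stable because the limiting variance $\lambda(\beta)$ is $\mathcal F_\infty$-measurable (indeed deterministic).

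The main obstacle is the drift term $A_n$, because \eqref{lim0} only controls $\sum_{i=1}^n\Lambda_{i-1}$ up to an error $\mathrm o(a_n(\beta))$, whereas here I need it up to $\mathrm o(\sqrt{a_n(\beta)})$. To get this sharper estimate I would compare $\Lambda_{i-1}=\alpha\,(i-1)^{\beta-1}(\overline R_{i-1})^{\beta-1}$ with the deterministic surrogate $\widetilde\Lambda_{i-1}=\alpha\,(i-1)^{\beta-1}m_R^{\beta-1}$. The deterministic sum $\sum_i\widetilde\Lambda_{i-1}=\alpha\,m_R^{\beta-1}\sum_j j^{\beta-1}$ matches $\lambda(\beta)a_n(\beta)$ up to a bounded remainder (Euler--Maclaurin for $\sum_j j^{\beta-1}$, resp.\ for the harmonic sum when $\beta=0$), contributing $\mathrm O(1)/\sqrt{a_n(\beta)}\to 0$. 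For the random part, a first-order Taylor expansion of $x\mapsto x^{\beta-1}$ together with the uniform lower bound $R_i\ge v>0$ gives the pathwise estimate $|\Lambda_{i-1}-\widetilde\Lambda_{i-1}|\le C\,i^{\beta-1}\,|\overline R_{i-1}-m_R|$; taking expectations and using $E[(\overline R_{i-1}-m_R)^2]=\mathrm{Var}(R_1)/(i-1)$, finite since $E[R_i^2]<\infty$, yields $E\,|\Lambda_{i-1}-\widetilde\Lambda_{i-1}|=\mathrm O(i^{\beta-3/2})$. Summing, $E\,\bigl|\sum_i(\Lambda_{i-1}-\widetilde\Lambda_{i-1})\bigr|=\mathrm O(n^{\beta-1/2})$ (and $\mathrm O(1)$ if $\beta\le 1/2$), so the random part of $A_n$ is $\mathrm O(n^{(\beta-1)/2})\to 0$ in $L^1$; here the hypothesis $\beta<1$ is exactly what makes the $R_i$-fluctuations subcritical at the scale $\sqrt{a_n(\beta)}$. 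The endpoint $\beta=1$ is immediate, since then $\Lambda_{i-1}\equiv\alpha$ is deterministic, $A_n\equiv 0$, and $B_n$ reduces to the classical central limit theorem for i.i.d.\ Poisson$(\alpha)$ summands. Combining $A_n\to 0$ in probability with the martingale central limit theorem for $B_n$ then gives the assertion.
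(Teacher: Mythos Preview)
Your proof is correct and follows essentially the same route as the paper: the same decomposition into a drift term $A_n$ and a martingale term $B_n$, the same mean-value/Taylor bound (using $R_i\ge v$) together with $\mathrm{Var}(\overline R_i)=O(1/i)$ to show $A_n\to 0$ in $L^1$, and a martingale CLT for $B_n$. The only cosmetic differences are that the paper checks the Hall--Heyde conditions (sum of squares $\to\lambda(\beta)$, negligible maximum, uniform $L^2$-bound on the maximum) instead of your Lyapunov fourth-moment bound, and it absorbs the deterministic Euler--Maclaurin remainder into the phrase ``by some calculations''.
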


\begin{proof} The result for $\beta=1$ follows from the classical central 
limit theorem, since, in this case, $L_n$ is the sum of $n$ independent
random variables with distribution ${\mathcal P}(\alpha)$.  Assume now
$\beta\in[0,1)$ and set $\Lambda_0=\alpha$. We first prove that
\begin{gather}\label{uje}
\sqrt{a_n(\beta)}\,\Bigl\{\frac{\sum_{i=1}^n\Lambda_{i-1}}{a_n(\beta)}-
\lambda(\beta)\Bigr\}\overset{P}\longrightarrow 0.
\end{gather}
By some calculations, condition \eqref{uje} is equivalent to
\begin{gather}\label{uje2}
\frac{\sum_{i=1}^{n-1}\,\big\{\bigl(\sum_{j=1}^i R_j\bigr)^{\beta-1}-
(m_R\,i)^{\beta-1}\bigr\}}{\sqrt{a_n(\beta)}}\overset{P}\longrightarrow 0.
\end{gather}
Since $R_j\geq v>0$, we have $m_R \geq v>0$ and we obtain
\begin{equation*}
\begin{split}
E\left[
\,\left|{(m_R\,i)^{\beta-1}-\left(\sum_{j=1}^i R_j\right)^{\beta-1}}\right|\,
\right]
&\leq
\frac{E\left[
\,\left|{\bigl(\sum_{j=1}^i R_j\bigr)^{1-\beta}-(m_R\,i)^{1-\beta}}\right|\,
\right]}
{(v\,i)^{2(1-\beta)}}
\\
&\leq
\frac{1}{(v\,i)^{2(1-\beta)}}\,
\frac{1-\beta}{(v\,i)^{\beta}}\,
E\left[\,\left|{\sum_{j=1}^i R_j-m_R\,i}\right|\,\right]\\
&=
\frac{1-\beta}{v^{2-\beta}}\,\frac{1}{i^{1-\beta}}
E\left[\,|{\overline{R}_i-m_R}|\,\right]\\
&\leq
\frac{1-\beta}{v^{2-\beta}}\,\frac{1}{i^{1-\beta}}
\sqrt{Var[\overline{R}_i]}
=\frac{(1-\beta)\sqrt{Var[R_1]}}{v^{2-\beta}}\,\frac{i^{\beta-1}}{\sqrt{i}}.
\end{split}
\end{equation*}

This proves condition \eqref{uje2} (and so \eqref{uje}). Indeed, we
have
\begin{equation*}
\begin{split}
&\frac{1}{\sqrt{a_n(\beta)}}
E\left[\,
\left|
\sum_{i=1}^{n-1}\,\left\{
\left(\sum_{j=1}^i R_j\right)^{\beta-1}-(m_R\,i)^{\beta-1}
\right\}
\right|
\,\right]
\leq\\
&\frac{1}{\sqrt{a_n(\beta)}}\sum_{i=1}^{n-1}
E\left[
\,
\left|
(m_R\,i)^{\beta-1}-\left(\sum_{j=1}^i R_j\right)^{\beta-1}
\right|\,
\right]
\leq\\
&\frac{(1-\beta)\sqrt{Var[R_1]}}{v^{2-\beta}}\,
\frac{1}{\sqrt{a_n(\beta)}}
\sum_{i=1}^{n-1}\frac{1}{i^{1-(\beta-1/2)}}
\to 0.
\end{split}
\end{equation*}

\smallskip Next, define
\begin{gather*}
T_n=\sqrt{a_n(\beta)}\,\Bigl\{\frac{L_n}{a_n(\beta)}-
\frac{\sum_{i=1}^n\Lambda_{i-1}}{a_n(\beta)}\Bigr\}=
\frac{\sum_{i=1}^n(N_i-\Lambda_{i-1})}{\sqrt{a_n(\beta)}}.
\end{gather*}

In view of \eqref{uje}, it suffices to show that
$T_n\stackrel{d}\longrightarrow\mathcal{N}\bigl(0,\,\lambda(\beta)\bigr)$.

To this end, for $n\geq 1$ and $i=1,\ldots,n$, define
\begin{gather*}
T_{n,i}=\frac{N_i-\Lambda_{i-1}}{\sqrt{a_n(\beta)}},\quad\mathcal{G}_{n,0}=
\mathcal{F}_0\quad\text{and}\quad\mathcal{G}_{n,i}=\mathcal{F}_i,
\end{gather*}
where ${\cal F}_i$ is the natural $\sigma$-field associated to the
model until time $i$. Then, we have
$E[T_{n,i}\mid\mathcal{G}_{n,i-1}]=0$,
$\mathcal{G}_{n,i}\underline{\subset}\mathcal{G}_{n+1,i}$ and
$T_n=\sum_{i=1}^n T_{n,i}$.  Thus, by a martingale central limit
theorem (see \cite{HH}),
$T_n\stackrel{d}\longrightarrow\mathcal{N}\bigl(0,\,\lambda(\beta)\bigr)$ 
  provided
\begin{gather*}
\text{(i) }\sum_{i=1}^n T_{n,i}^2\overset{P}\longrightarrow\lambda(\beta),\quad
\text{(ii) }\max_{1\leq i\leq n}|T_{n,i}|\overset{P}\longrightarrow 0,\quad
\text{(iii) }\sup_n E\left[\max_{1\leq i\leq n} T_{n,i}^2\right]<\infty.
\end{gather*}

Let
\begin{gather*}
D_i=(N_i-\Lambda_{i-1})^2\quad\text{and}\quad 
U_n=\frac{\sum_{i=1}^n\bigl\{D_i-E[D_i\mid\mathcal{F}_{i-1}]\bigr\}}{a_n(\beta)}
=\frac{\sum_{i=1}^n (D_i-\Lambda_{i-1})}{a_n(\beta)}.
\end{gather*}

By the same martingale argument used in the proof of the previous
theorem and by Kronecker's lemma, $U_n\overset{a.s.}\longrightarrow
0$.  Then, by (\ref{lim0}),
\begin{gather*}
\sum_{i=1}^n T_{n,i}^2=\frac{\sum_{i=1}^n D_i}{a_n(\beta)}=
U_n+\frac{\sum_{i=1}^n\Lambda_{i-1}}{a_n(\beta)}
\overset{a.s.}\longrightarrow\lambda(\beta).
\end{gather*}

This proves condition (i). As to (ii), fix $k\geq 1$ and note that
\begin{gather*}
\max_{1\leq i\leq n} T_{n,i}^2\leq
\frac{\max_{1\leq i\leq k} D_i}{a_n(\beta)}+
\max_{k<i\leq n}\,\frac{D_i}{a_i(\beta)}\leq
\frac{\max_{1\leq i\leq k} D_i}{a_n(\beta)}+
\sup_{i>k}\,\frac{D_i}{a_i(\beta)}\quad\text{for }n>k.
\end{gather*}
Hence, $\limsup_n\max_{1\leq i\leq n} T_{n,i}^2\leq\limsup_n\frac{D_n}{a_n(\beta)}$ 
and condition (ii) follows since

\begin{gather*}
\frac{D_n}{a_n(\beta)}=
\frac{\sum_{i=1}^n D_i}{a_n(\beta)}-\frac{\sum_{i=1}^{n-1} D_i}{a_n(\beta)}
\overset{a.s.}\longrightarrow 0.
\end{gather*}
Finally, condition (iii) is a consequence of 
\begin{gather*}
E\left[\max_{1\leq i\leq n} T_{n,i}^2\right]\leq
\frac{\sum_{i=1}^n E[D_i]}{a_n(\beta)}=
\frac{\sum_{i=1}^n E[\Lambda_{i-1}]}{a_n(\beta)} = \\
=\frac{\Lambda_0+\sum_{i=1}^{n-1} E[\Lambda_{i}]}{a_n(\beta)}
\leq\frac{\alpha\left(1+\sum_{i=1}^{n-1} (vi)^{\beta-1}\right)}{a_n(\beta)}.
\end{gather*}
\end{proof}

\subsection{Analysis of 
the random fitness parameters $R_i$}

Now our purpose is to find, under the assumption of our model, a
procedure to get information on the random variables $R_i$ from the
data, that typically are the values of $Z_1,\dots, Z_n$, i.e., $n$
rows of the matrix $Z$, where $n$ is the number of the observed nodes.

Unfortunately, this goal is not easily tractable as we will point out
in the sequel. The method we empirically tested extracts from the
data, with a maximum log-likelihood procedure (see
Section~\ref{sec:monte-carlo-experiments}), a plausible realization
$\widehat r_1,\dots, \widehat r_{k_n}$ of $R_1,\dots,R_{k_n}$, for a
suitable $k_n$; this information could be useful, for instance, to
reconstruct the ranking induced by $R_i$.  Note that we ideally would
like to find a probable realization for all the fitness parameters of
the observed nodes (not only for the first $k_n$ nodes), but we do not
possess the same amount of information about all $R_i$: in particular,
while $R_1$ influences all the subsequent observed rows of the matrix
$Z$, $R_{n-1}$ has only influence over $Z_n$. So we cannot expect to
find good values for all the random variables.

With the above purpose in mind, we now give a general expression for
the conditional probability of observing $Z_1=z_1,\dots, Z_n=z_n$
given $R_1,\dots, R_{n-1}$. We refer to Section \ref{sec-model} for the
notation.

The first row $Z_1$ is simply identified by $L_1=N_1$ and so
\begin{equation*}
\begin{split}
P(Z_1=z_1)&=P(N_1=n_1=card\{k: z_{1,k}=1\})\\
&=Poi(\alpha)\{n_1\}=e^{-\alpha}\frac{\alpha^{n_1}}{n_1!}.
\end{split}
\end{equation*}

Then the second row is identified by the values $Z_{2,k}$ with
  $k=1,\dots, L_1=N_1$ and by $N_2$ and so 
\begin{equation*}
\begin{split}
&P(Z_2=z_2|Z_1, R_1)=\\
&P(Z_{2,k}=z_{2,k}\,\hbox{for } k=1,\dots,L_1,\, 
N_2=n_2=card\{k>L_1: z_{2,k}=1\} |Z_1, R_1)
=\\
&\prod_{k=1}^{L_1} P_1(k)^{z_{2,k}}(1-P_1(k))^{1-z_{2,k}}
\times
Poi(\Lambda_1)\{n_2\},
\end{split}
\end{equation*}
where $P_1(k)$ is defined in (\ref{eq:define_inclusion_prob}) and 
$\Lambda_1$ is defined in (\ref{eq:define_lambda}).

The general formula is 
\begin{equation*}
\begin{split}
&P(Z_{j+1}=z_{j+1}|Z_1,R_1,\dots,Z_j,R_j)=\\ 
&P\left(Z_{j+1,k}=z_{j+1,k}\,\hbox{for } k=1,\dots,L_j,\,\right.
\\
&\quad\quad \left.N_{j+1}=n_{j+1}=card\{k>L_j: z_{j+1,k}=1\}
|Z_1, R_1,\dots,Z_j,R_j\right)
=\\ 
&\prod_{k=1}^{L_j} P_j(k)^{z_{j+1,k}}(1-P_j(k))^{1-z_{j+1,k}} \times
  Poi(\Lambda_j)\{n_{j+1}\},
\end{split}
\end{equation*}
where $P_j(k)$ is defined in (\ref{eq:define_inclusion_prob}) and 
$\Lambda_j$ is defined in (\ref{eq:define_lambda}).

Hence, for $n$ nodes, we can write a formula for the conditional
probability of observing $Z_1=z_1,\dots, Z_n=z_n$ given $R_1,\dots,
R_{n-1}$:
\begin{equation} \label{eq:prod_likelihood}
\begin{split}
&P(Z_1=z_1,\dots,Z_n=z_n|R_1,\dots,R_{n-1})=\\
&P(Z_1=z_1)
\prod_{j=1}^{n-1} P(Z_{j+1}=z_{j+1}|Z_1,R_1,\dots,Z_j,R_j).
\end{split}
\end{equation}

\subsubsection{A Monte Carlo method}
\label{sec:introducing-monte-carlo}

The algorithm we applied is essentially a MCMC (Markov Chain
Monte Carlo) method \cite{MCMC-ML}, which uses the basic principle of
Gibbs sampling \cite{Gibbs-tutorial}: fix all components of a vector
except one and compare the different values of the likelihood obtained
for various values of the non-fixed component.

The method employs the aforementioned formula
(\ref{eq:prod_likelihood}) for the conditional pro\-ba\-bi\-li\-ty of
observing $Z_1=z_1,\dots, Z_n=z_n$ given the values of $R_1,\dots,
R_{n-1}$. Precisely, using the symbol $\overline{z}$ in order to
denote the matrix with rows $z_1,\dots,z_n$ 
and the symbol $\overline{r}$ in order to denote a vector of component
$r_1,\dots, r_n$, set
\begin{equation}\label{l}
P(Z=
\overline{z}|R=\overline{r})=
P(Z_{1}=z_{1},\dots,Z_{n}=z_{n}|R_{1}=r_{1},\dots,R_{n}=r_{n}).
\end{equation}
We want to find a vector $\widehat{\overline r}$ that is a 
point maximizing the likelihood function (\ref{l}) corresponding to the
observed $\overline{z}$.
\footnote{We point out that our algorithm can not be considered a
  proper statistical estimation procedure for the fitness
  parameters. In particular, although it resembles the Bayesian {\em
    Maximum a posteriori probability} (MAP) estimation when the a
  priori distribution is an (improper) uniform distribution, we do not
  have a vector of parameters with a fixed dimension: the number of
  parameters in our case increases with the number of observations.}

The basic algorithm is described in Alg.~\ref{alg:montecarlo}. It is
regulated by these parameters:
\begin{itemize}
\item $\overline{r}^{0}\in\mathbb{R}^{n}$ is the initial guess for
  $\widehat{\overline r}$;

\item $J\in\mathbb{N}^{+}$ is the number of ``jumps to a new value'',
  i.e., the number of the new values analyzed for a certain component
  at each step;

\item $\sigma\in\mathbb{R}^{+}$ is the standard deviation of each ``jump''.

\end{itemize}

\begin{algorithm}[H]
\begin{center}

\framebox{\parbox[t]{0.8\columnwidth}
{

\textsc{Input}: $z_{1},\dots,z_{n}$, the observed features of each of
the $n$ observed nodes, i.e., the first $n$ rows of the attribute matrix
$Z$

\textsc{Output}: $\widehat{\overline r}$, a maximum point for the
likelihood function associated to the input data

\textsc{Description:}
\begin{enumerate}
\item $\widehat{\overline r}\leftarrow\overline{r}^{0}$
\item Repeat the following loop until convergence:

\begin{enumerate}
\item Choose a random node $i\in\left\{ 1,\dots,n\right\} $

\item Extract $J$ values $h_{1},\dots,h_{J}$ from the normal distribution
  $\mathcal{N}(r_{i},\sigma^{2})$; re-sample each $h_{j}$
  until $h_{j}>0$.

\item For each value $h_{j}$, compute 
\[
\mathcal{L}(h_{j})=
P(Z=\overline{z}|R_{1}=r_{1},\dots,R_{i}=h_{j},\dots,R_{n}=r_{n})\]

\item $\widehat r_{i}\leftarrow
\underset{h\in\left\{ r_{i},h_{1},\dots,h_{J}\right\} }
{\arg\max}\mathcal{L}(h)$
\end{enumerate}
\end{enumerate}
}}

\caption{\label{alg:montecarlo} 
Basic Monte Carlo algorithm to find $\widehat{\overline r}$. }
\end{center}
\end{algorithm}

It is worth to note that, given $\mathcal{L}(r_{i})$, it is possible
to find $\mathcal{L}(h)$ without re-doing the whole computation. In
fact, let us consider the product in eq. (\ref{eq:prod_likelihood}): a
change from $r_{i}$ to $h$ must be taken into account only from the
$i$-th factor onward -- that is, for the factors that come after
$P(Z_{i}=z_{i}|Z_{1},R_{1},\dots,Z_{i-1},R_{i-1})$.  In particular,
let $\delta=h-r_{i}$; then, for each $j$-th factor, with $j\geq i$, we
have to:

\begin{itemize}
\item add $\delta$ to the term $\sum_{i=1}^{j} R_{i}$, inside
  $\Lambda_{j}$ and $P_{j}(k)$ (defined in
  eq. \eqref{eq:define_inclusion_prob} and \eqref{eq:define_lambda});
\item add $\delta$ to the numerator of $P_{j}(k)$ when $k$ is
  s.t.~$z_{i,k}=1$; that is to say, change the global weight of a
  feature only if the node we changed has that feature.
\end{itemize}

Every other term in the equation remains unchanged and does not need to
be computed again. This remark allows us to speed up the implementation
con\-si\-de\-ra\-bly.\\

Figure~\ref{fig:gibbs1-likelihood} confirms that the algorithm moves
toward a vector $\widehat{\overline r}$ maximizing
$P(Z=\overline{z}|R=\overline{r})$ and shows
that the algorithm effectively converges. As a stopping criterion, we
can use the maximum increase in the log-likelihood in the last
iterations: when this is under a certain threshold $t$, we stop the
algorithm. The obtained outputs will be discussed in details in
Section \ref{sec:monte-carlo-experiments}.

\begin{figure}
	\begin{center}
		\begin{tabular}{cc}
			\footnotesize $R_{i}$ sampled from a uniform &
			\footnotesize $R_{i}$ sampled from a uniform discrete 
			\tabularnewline
			\footnotesize distribution on the interval~$[0.5,1.5]$ &
			\footnotesize distribution on the two 
values~$\{0.25,1.75\}$
			\tabularnewline
		\includegraphics[width=0.45\columnwidth]{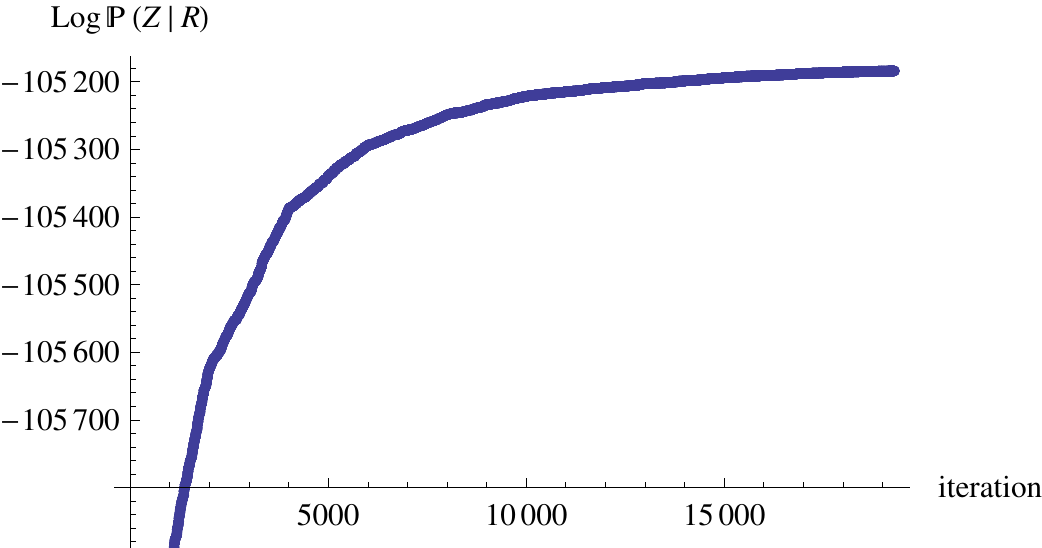}
	&  \includegraphics[width=0.45\columnwidth]{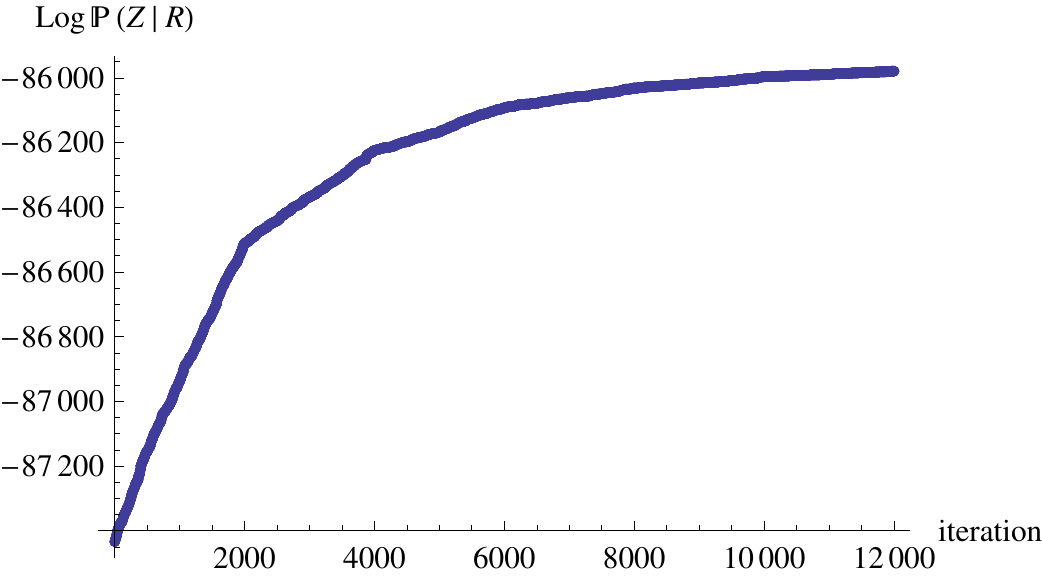}
			\tabularnewline
		\end{tabular}
		\caption{\label{fig:gibbs1-likelihood} Value of the
                  log-likelihood during the execution of the
                  algorithm, for different distributions of $R_{i}$.
                  The chosen algorithm parameters are $\sigma^{2}=1$,
                  $J=4$ and $\overline{r}^{0}=\mathbf{1}$ (the vector
                  with all $1$'s). The matrix $Z$ has $2000$ rows (nodes) and
                  it was generated with $\alpha=3$ and $\beta=0.9$.}
	\end{center}
\end{figure}

As already said, one point that we need to keep in mind is that we do
not possess the same amount of information about all the random
variables $R_i$: in particular, while $R_1$ influences all the
subsequent rows of the matrix $Z$, $R_{n-1}$ has only influence over
the last one. So we cannot expect the output values to be very
accurate for the last segment. For this reason, we also implemented a
variant of the algorithm that considers only the first $k_n$
nodes. Thus, we have another algorithm parameter $k_n$ so that the
choice of the jumping node at step 2(a) is restricted to $i \in
\lbrace 1, \dots, k_n\rbrace$ and, finally, the output will be the
corresponding segment of $\widehat{\overline r}$, i.e., $\widehat
r_1,\dots,\widehat r_{k_n}$. This variant converges faster and
moreover it allows to use larger values of the algorithm parameter
$J$.

Another relevant point is that the parameters $\alpha$
and $\beta$ enter the expression (\ref{eq:prod_likelihood}).
Therefore, in practice, before applying the algorithm, we need to
estimate them. As shown in Remark \ref{remark-estimator-beta}, we are
able to estimate $\beta$ starting from the observed values of the
matrix $Z$. On the other hand, as shown in Remark
\ref{remark-estimator-alpha}, the estimation of $\alpha$ presupposes
the knowledge of the mean value $m_R$ of the fitness parameters $R_i$
(except for the special case $\beta=1$).  Hence, we are in the
situation in which, in order to get information on the fitness
parameters by the proposed algorithm, we need to estimate $\alpha$ and
$\beta$, but, in order to estimate $\alpha$, we need to know the mean
value $m_R$. This problem can be partially solved as follows. Since
the term $P(Z_1=z_1)$ does not contain the $R_i$'s, the research of a
vector $\widehat{\overline r}$ that maximizes (\ref{l})
is equivalent to the research of a vector $\widehat{\overline r}$ maximizing
the product $$
\prod_{j=1}^{n-1} P(Z_{j+1}=z_{j+1}|Z_1,R_1,\dots,Z_j,R_j)
$$ 
in formula (\ref{eq:prod_likelihood}). On the other hand, each term of
the above product contains the inclusion probabilities $P_j(k)$, that
are invariant with respect to the normalization of the $R_i$'s by their
mean value $m_R$, and the $\Lambda_j$'s that have the property
$$
\Lambda_j=f(\alpha,\beta, {\overline r})=
f(\alpha/(m_R)^{1-\beta},\beta, {\overline r}/m_R)
$$ 
(where ${\overline r}/m_R$ denotes the vector with components
$r_i/m_R$). Consequently, starting from the observed values of the
matrix $Z$, we can 
\begin{itemize}
\item first, estimate $\beta$ by Remark \ref{remark-estimator-beta}; 
\item then estimate $\alpha'=\alpha/(m_R)^{1-\beta}$ by Remark
  \ref{remark-estimator-alpha} (i.e., $\widehat\alpha'_n$ equal to
  $\widehat\gamma_n$ or $\beta\,\widehat\gamma_n$ according to the
  estimated value of $\beta$);
\item finally, extract a plausible realization $\widehat{\overline
  r}'=\widehat{\overline r}/m_R$ (of the random variables
  $R_i'=R_i/m_R$) as a maximum point of the corresponding expression
  of the likelihood with the estimated value of $\beta$ and $\alpha'$. 

  Therefore, the output of the algorithm will be $\alpha'$, $\beta$
  and a plausible realization $\widehat{\overline r}'$ of the random
  variables $R_i'=R_i/m_R$.
\end{itemize}
\indent Finally, we highlight that it is possible to experiment other
variants of the algorithm, for example, by using a distribution
different from the normal for the jumps, or changing $\sigma$ during
the execution (e.g., reducing it according to some ``cooling
schedule'', as it happens in simulated
annealing~\cite{simulated-annealing}).  Additionally, instead of
looking for the values on the whole positive real line, we could
restrict the research on a suitable interval (guessed for the
particular real case).

\section{Simulations for the attribute matrix}
\label{sec:sim-Z}

In this section, we shall present a number of simulations we performed 
in order to illustrate the role of the parameters of the model and
also to see how good the proposed tools turn out to be.

\subsection{Estimating $\alpha$ and $\beta$}

Firstly, we aim at pointing out the role played by the model
parameters $\alpha$ and $\beta$. Therefore, we fix a distribution
for the random fitness parameters with $m_R=1$ and we simulate the
matrix $Z$ for different values of $\alpha$ and $\beta$ (fixing one
and making the other one change). More precisely, we assume that the
random variables $R_n$ are uniformly distributed on the interval
$[0.25, 1.75]$.

\begin{figure}
	\begin{center}
		\includegraphics[width=0.75\columnwidth]{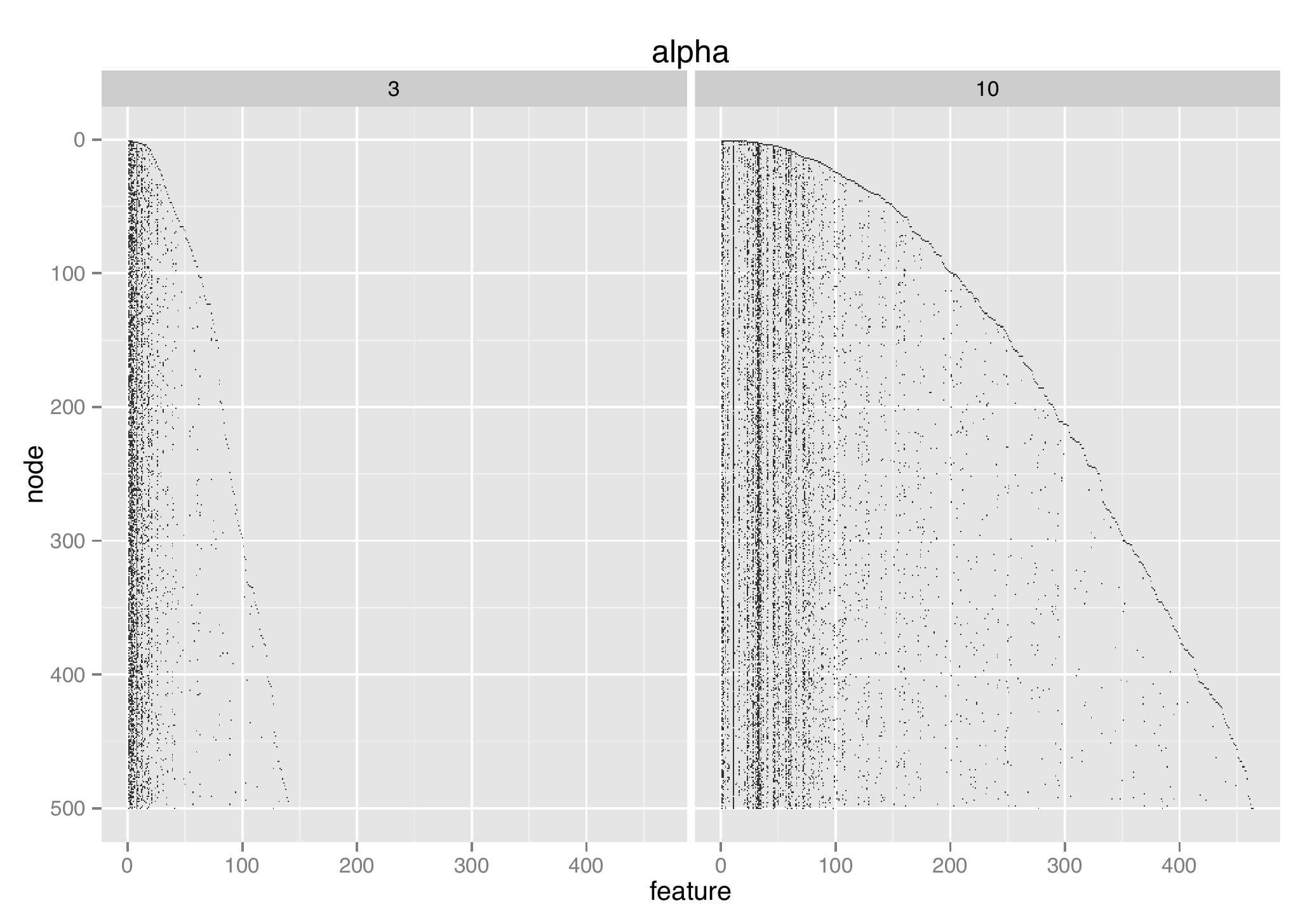}
		\caption{\label{fig:alpha-zmatrix} The $Z$ matrix for
                  $n=500$, two different values of $\alpha$
                  ($\alpha=3$ and $\alpha=10$) and a fixed
                  $\beta=0.5$. The random variables $R_i$ are
                  uniformly distributed on the interval $[0.25,
                    1.75]$.}
\end{center}
\end{figure}

In Figure~\ref{fig:alpha-zmatrix}, we visualize the effect of
$\alpha$: a larger $\alpha$ yields a larger number of new attributes
per node.

In Figure~\ref{fig:slope-beta}, instead, we visualize how different
positive values of $\beta$ yield a different power-law (asymptotic)
behavior of $L_n$. Indeed, in this figure, we have the log-log plot
of $L_n$ as a function of $n$. In the first two panels, we present two
different positive values of $\beta$ ($0.75$ and $0.5$), showing the
correspondence with the power-law exponent of $L_n$, estimated by the
slope of the regression line. Moreover, in the third panel, we point
out that the parameter $\alpha$ do not affect the power-law exponent
of $L_n$.

Figure~\ref{fig:estimator-beta} underlines that the estimator proposed
in remark \ref{remark-estimator-beta} works better (i.e. with a more
precision) for large values of $\beta$ since $L_n$ reaches the
power-law behavior more quickly for larger values of $\beta$.

\begin{figure}[htb]
	\begin{center}
	\includegraphics[width=1.0\columnwidth]{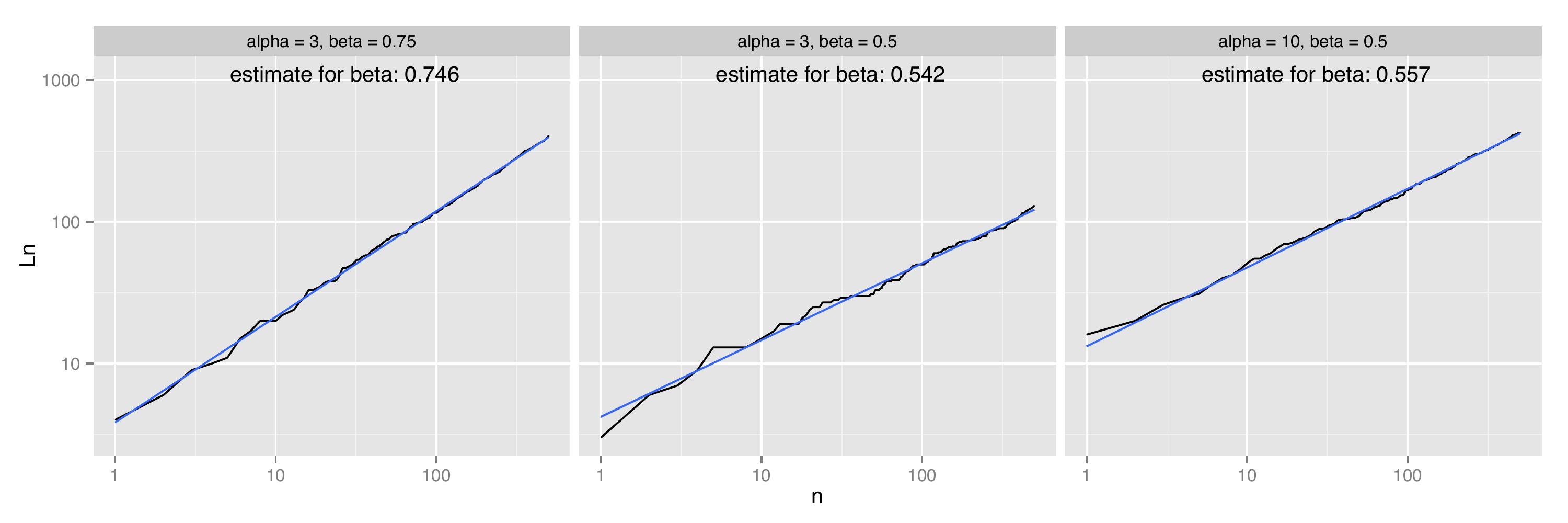}
	\caption{\label{fig:slope-beta} Correspondence between the
          parameter $\beta$ and the power-law exponent of $L_n$ as a
          function of $n$. The estimate of $\beta$ is the slope of the
          regression line. Here, we have $500$ nodes and the random
          variables $R_i$ are uniformly distributed on the interval
          $[0.25, 1.75]$. Values for $\alpha$ and $\beta$ are
          indicated above; we can see how different values for
          $\alpha$ do not affect the power-law behaviour. }
\end{center}
\end{figure}

\begin{figure}[htb]
	\begin{center}
		\begin{tabular}{ c c }
  			$\beta=0.5$ & $\beta=0.75$ \\
  	\includegraphics[width=0.45\columnwidth]{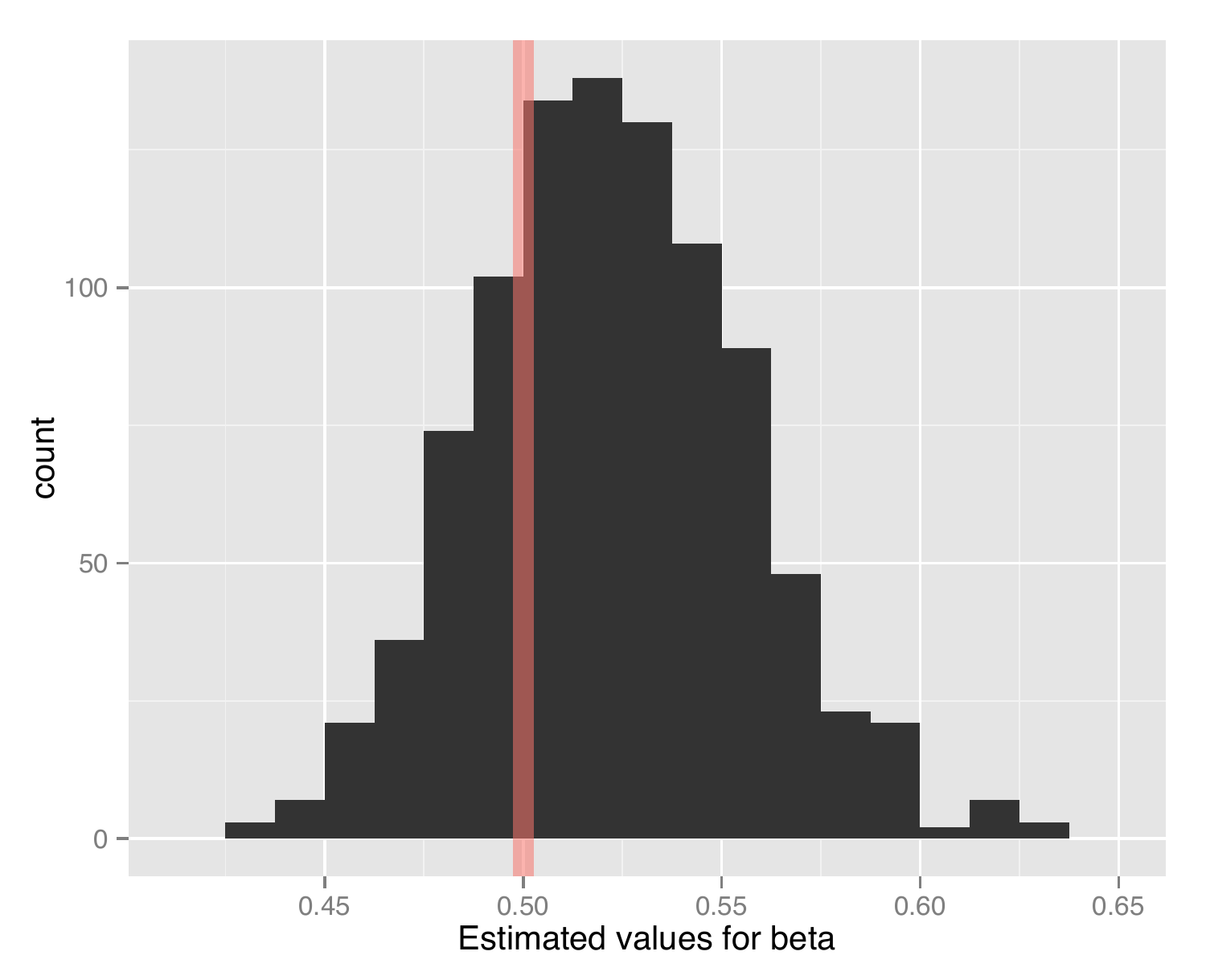} &  
\includegraphics[width=0.45\columnwidth]{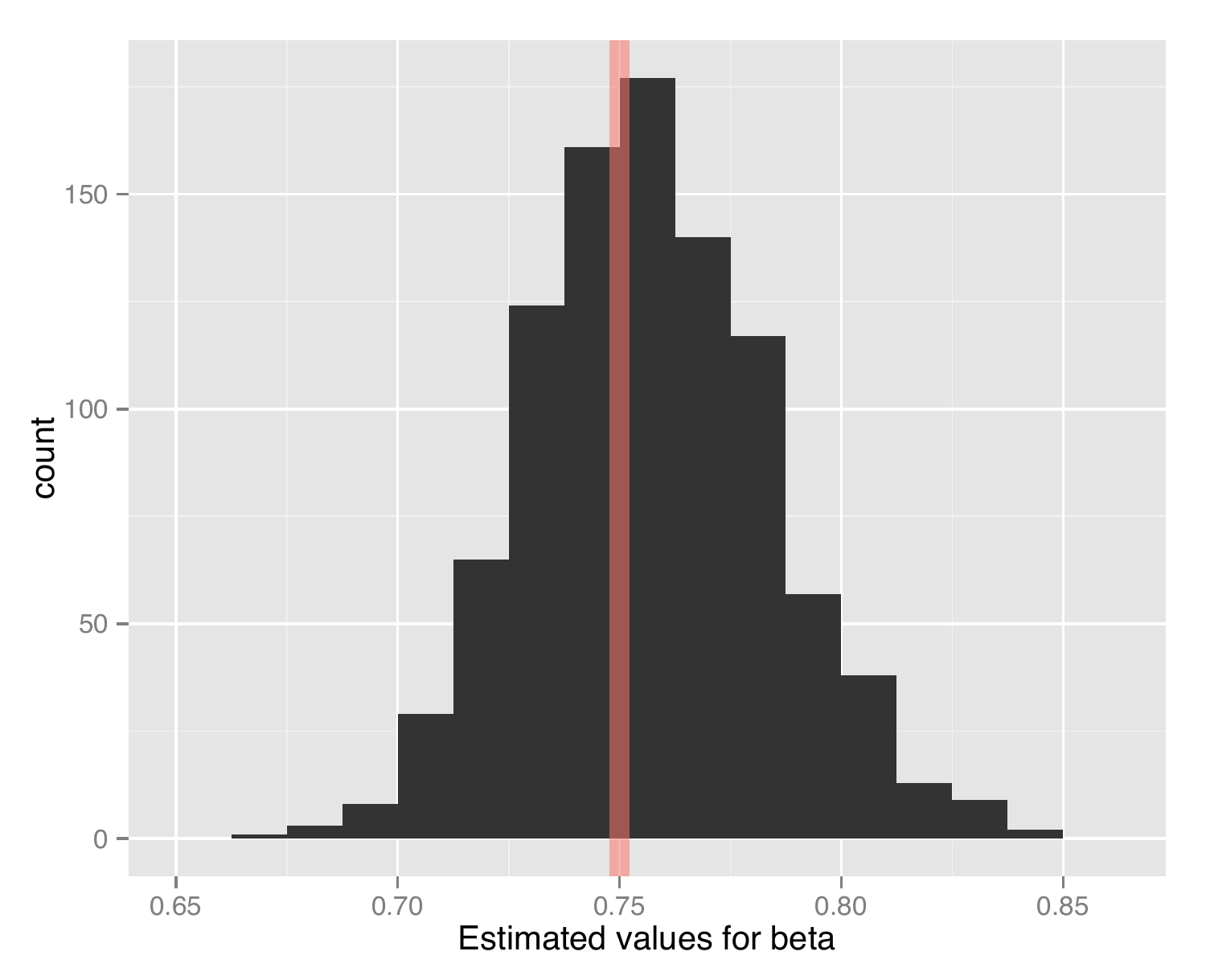} \\
		\end{tabular}
		\caption{ \label{fig:estimator-beta} Distribution of
                  the estimator $\widehat\beta_n$ of $\beta$ over
                  $1000$ experiments, each with $n=2000$ and
                  $\alpha=3$.  The random variables $R_i$ are
                  uniformly distributed on the interval $[0.25,
                    1.75]$.  The red line indicates the true value of
                  $\beta$. }
	\end{center}
\end{figure}

Similarly, we evaluated the estimator $\widehat\alpha_n$ of $\alpha$,
obtained by using the slope of the regression line in the plot of
$L_n$ as a function of $n^\beta$, as said in Remark
\ref{remark-estimator-alpha} (note that we have $m_R=1$ and so
$\alpha$ coincides with $\alpha'$).  Results are illustrated in Figure
\ref{fig:estimator-alpha} and show how this estimator yields good
results.

\begin{figure}[htb]
	\begin{center}
		\begin{tabular}{ c c }
  			$\alpha=3$ & $\alpha=10$ \\
  \includegraphics[width=0.45\columnwidth]{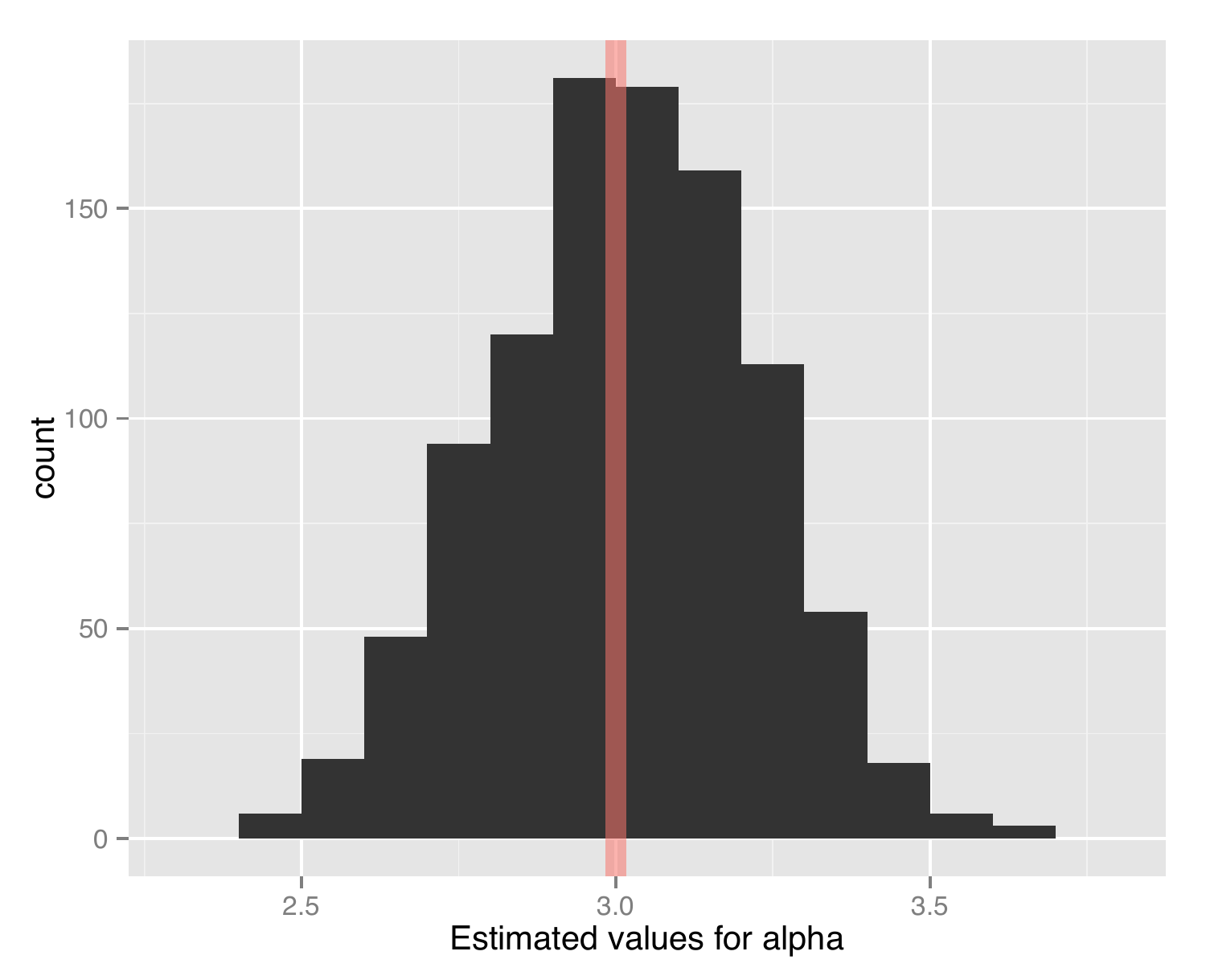} &  
\includegraphics[width=0.45\columnwidth]{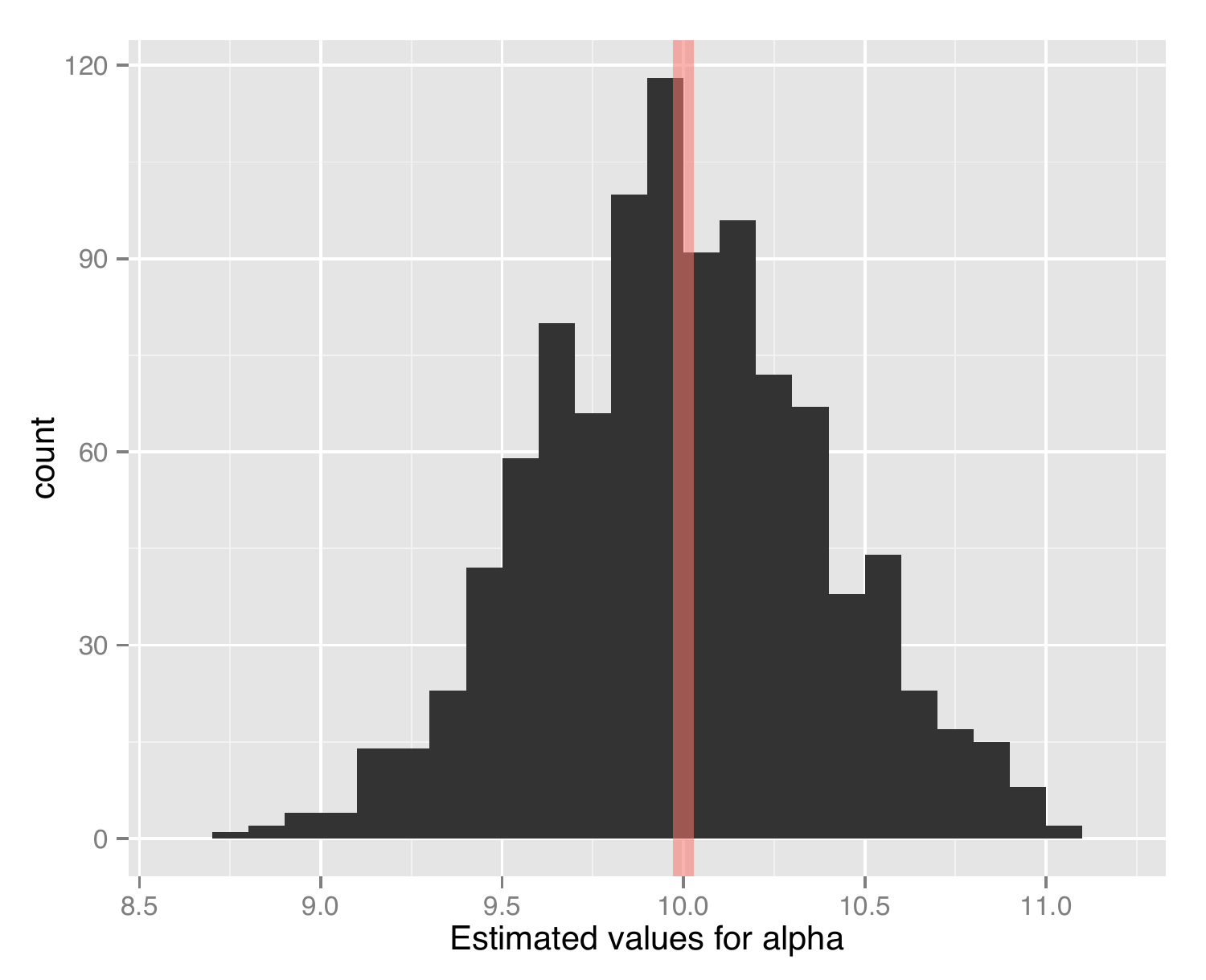} \\
		\end{tabular}
		\caption{ \label{fig:estimator-alpha} Distribution of
                  the estimator $\widehat\alpha_n$ of $\alpha$ over
                  $1000$ experiments, each with $n=2000$ and
                  $\beta=0.5$. The random variables $R_i$ are
                  uniformly distributed on the interval $[0.25, 1.75]$
                  (so $m_R=1$ and $\alpha=\alpha'$). The red line
                  indicates the true value of $\alpha$.}
\end{center}
\end{figure}

\begin{figure}[htb]
	\begin{center}
		\includegraphics[width=0.75\columnwidth]{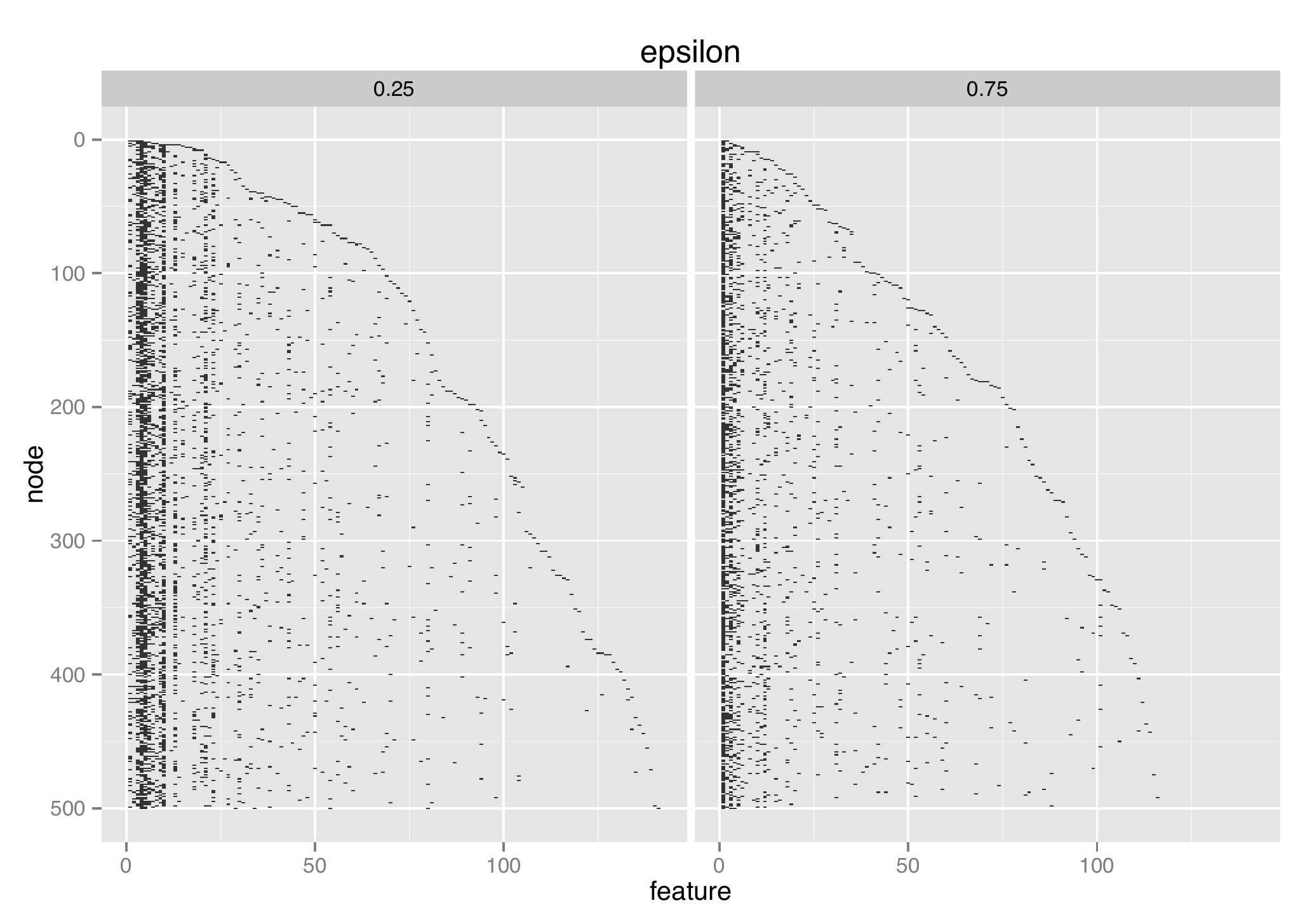}
		\caption{\label{fig:epsilon} Here $n=500$, $\alpha=3$,
                  $\beta=0.5$ and the random variables $R_i$ are
                  uniformly distributed on the interval
                  $[\epsilon,2-\epsilon]$ (so that $m_R=1$ and
                  $Var[R_i]=(1-\epsilon)^2/3$), for different values
                  of $\epsilon$ ($\epsilon=0.25$ and
                  $\epsilon=0.75$). The figure shows how $\epsilon$
                  affects the shape of $Z$.}
	\end{center}
\end{figure}

We also checked how the shape of the matrix $Z$ is influenced by the
distribution of the random parameters $R_n$. More precisely, we
analyzed the effect of $\epsilon$ on the shape of $Z$ when the random
variables $R_i$ are uniformly distributed on the interval
$[\epsilon,2-\epsilon]$, with $0<\epsilon\leq 1$, so that
$E[R_i]=m_R=1$ and the variance of $R_i$ is
$Var[R_i]=(1-\epsilon)^2/3$, which goes to zero as $\epsilon \to
1$. Hence, when $\epsilon$ is smaller, the variance of the $R_n$'s is
larger, so that also a ``young'' nodes $i$ have some chance of
transmitting their attributes to the other nodes (recall that a larger
$R_i$ makes $i$ more successful in transmitting its own
attributes). This is witnessed (see Figure~\ref{fig:epsilon}) by the
number of ``blackish'' vertical lines, that are more or less
widespread in the whole spectrum of nodes; whereas for larger
$\epsilon$ they are more concentrated on the left-hand side (i.e., only
the first nodes successfully transmit their attributes).

\subsection{Analysis of the random fitness parameters 
$R_i$}
\label{sec:monte-carlo-experiments}

We proceeded to test empirically how the Monte Carlo method performs
in re\-co\-ve\-ring the information on the fitness parameters
$R_{i}$. We tested its behavior against various distributions of
$R_{i}$; specifically, a uniform distribution on an interval, a
two-class uniform distribution, and finally a discrete power-law
distribution with $10$ classes. In the following of this section we
illustrate the details of such experiments, while, in the next
section, we will try to measure the performance of the proposed
technique. \\

\indent In every experiment, the matrix $Z$ has $n=2\,000$ nodes and it
was generated with $\alpha=3$ and $\beta=0.9$. The Monte Carlo
algorithm parameters were set as follows: $\sigma^{2}=1$, $J=4$ and
$\overline{r}^{0}=\mathbf{1}$ (the vector with all $1$'s). \\

For the first experiment, each $R_{i}$ is sampled from the uniform
distribution on the interval~$[0.5,1.5]$. We used the previously
discussed techniques to find the estimates of $\alpha$ and $\beta$:
the estimated values are $\widehat{\alpha}=3.095$ and
$\widehat{\beta}=0.893$ (note that we have $m_R=1$ and so
$\alpha=\alpha'$ and $\widehat{\overline r}=\widehat{\overline
  r}'$). Then, we tried the proposed Monte Carlo algorithm with the
stopping threshold $t={1}/{4}$. Results are visualized in
Figure~\ref{fig:gibbs1-results}, according to two different orderings
of the nodes:
\begin{itemize}
\item[i)] in the natural order, so that we confirm that our predictions
  are better for the first (i.e., the oldest) nodes than for the last
  (i.e., the youngest) ones;
\item[ii)] ordered by their true fitness values, so that we can show that
  we are, more or less, able to reconstruct the relative order of the
  fitness parameters (this fact will be made clearer in
  Section~\ref{sec:measure-tau}).
\end{itemize}

\begin{figure}[htb]
	\begin{center}
		\begin{tabular}{cc}
			in natural order & ordered by the value
			\tabularnewline
	\includegraphics[width=0.4\columnwidth]{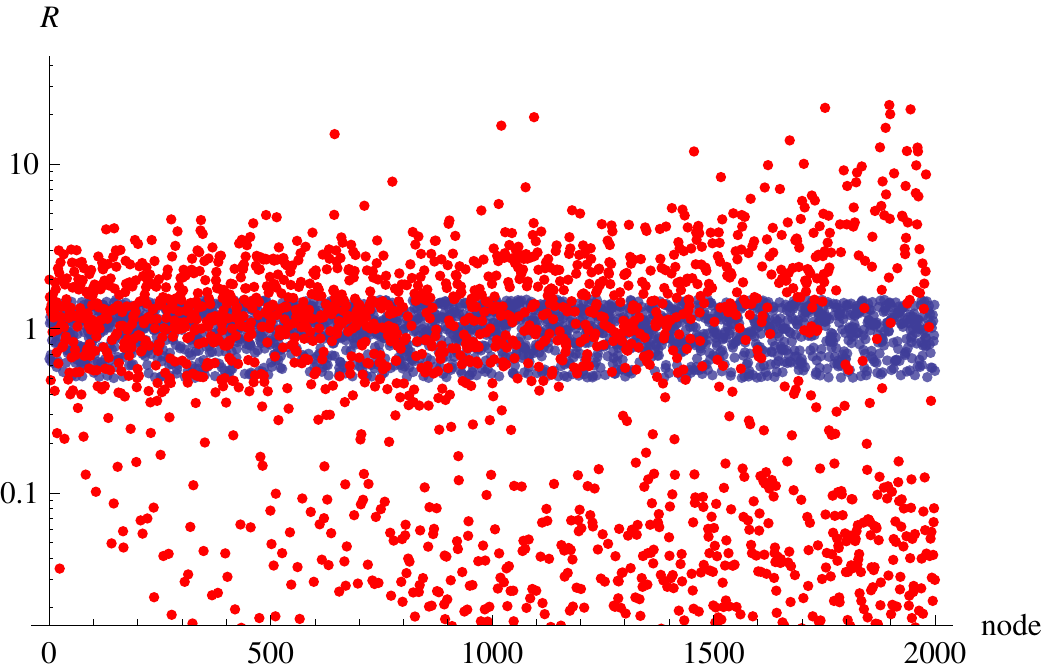} 
		& \includegraphics[width=0.4\columnwidth]{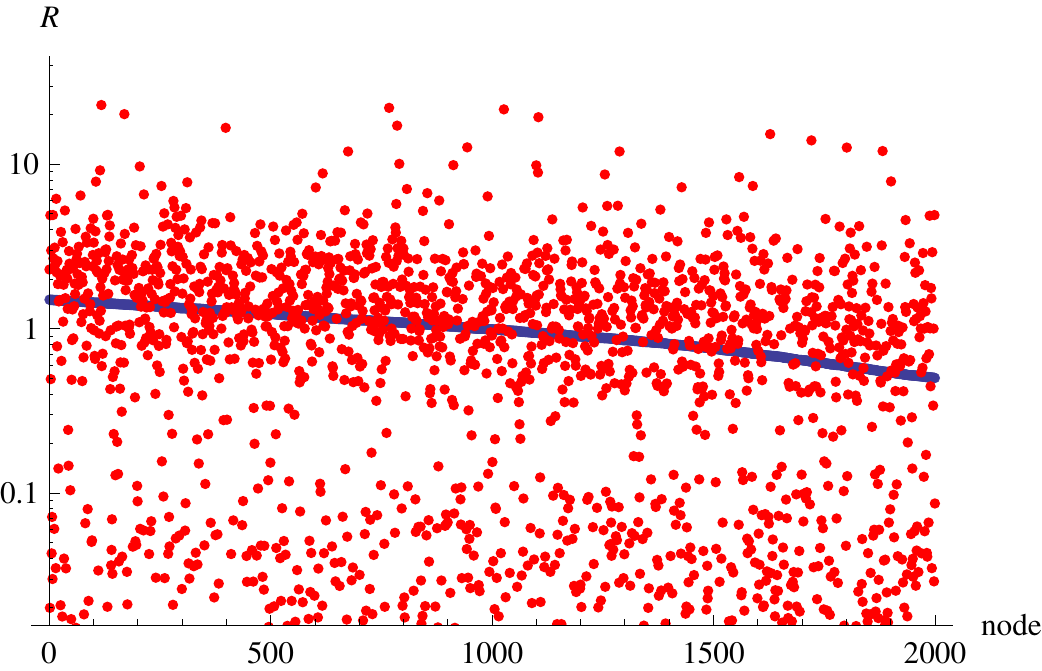}
			\tabularnewline
		\end{tabular}
		\caption{\label{fig:gibbs1-results} The extracted
                  realization $\widehat{\overline r}$ (in red) versus
                  the true realization $\overline{r}$ (in blue), with
                  two different orderings, in the case of uniform
                  distribution on the interval~$[0.5,1.5]$.
                  The empirical mean of the
         		   the first $\frac{n}{2}$ extracted values is $1.18$.}
	\end{center}
\end{figure}

In the second experiment, we applied our algorithm to a discrete case:
we sampled the fitness parameters $R_i$ from a set of only two values,
$\left\{0.25,1.75\right\}$, each with probability $\frac{1}{2}$.  We
left the parameters of the model and the ones of the algorithm
unaltered, except for moving the stopping threshold $t$ from $1/4$ to
$1$. The estimated values for $\alpha=3$ and $\beta=0.9$ are,
respectively, $\widehat{\alpha}=2.922$ (again $m_R=1$ and so
$\alpha=\alpha'$ and $\widehat{\overline r}=\widehat{\overline r}'$)
and $\widehat{\beta}=0.903$. The results of this second experiment are
more encouraging (we will see precise measurements in
Section~\ref{sec:measure-tau}). In this case, the output values of the
algorithm are closer to the true ones (see
Figure~\ref{fig:gibbs2-results}). Moreover, we can still observe the
same phenomena, i) and ii), described above.

\begin{figure}[htb]
	\begin{center}
		\begin{tabular}{cc}
			in natural order & ordered by value of $r_{i}$
			\tabularnewline
	\includegraphics[width=0.4\columnwidth]{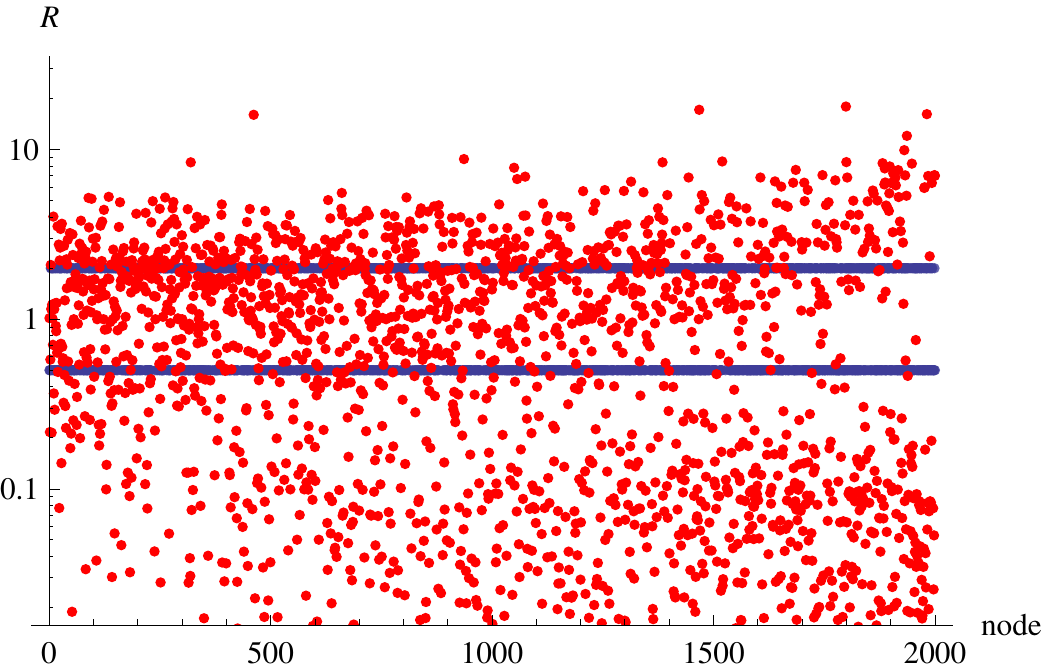} 
	& \includegraphics[width=0.4\columnwidth]{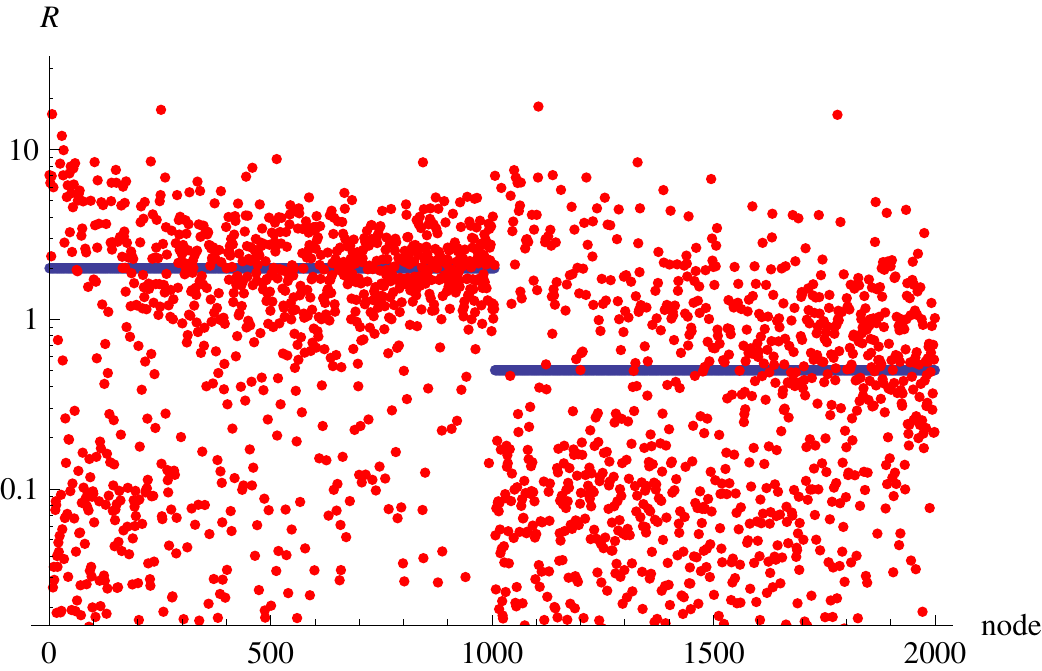}
			\tabularnewline
		\end{tabular}
	\caption{\label{fig:gibbs2-results} The extracted realization
          $\widehat{\overline r}$ (in red) versus the true realization
          $\overline{r}$ (in blue), with two different orderings, in
          the case of the uniform discrete distribution on the two
          values~$\{0.25,1.75\}$. The empirical mean of the
          the first $\frac{n}{2}$ extracted values is $1.33$.}
	\end{center}
\end{figure}

Finally, we applied the algorithm to a third case: we sampled $R_i$
from a normalized power-law discrete distribution, with $10$ possible
values -- specifically, a normalized discrete Zipf's law with exponent
$2$ and number of values $10$. We left both algorithm and model
parameters unaltered and we used $1$ as the stopping threshold $t$.

The estimated values for $\alpha=3$ and $\beta=0.9$ are, respectively,
$\widehat{\alpha}=3.595$ (again $m_R=1$ and so $\alpha=\alpha'$ and
$\widehat{\overline r}=\widehat{\overline r}'$) and
$\widehat{\beta}=0.868$.

Results for this case show that -- despite the fact that we have now a
discrete distribution with more than two values -- our approach can
recover information (especially for larger fitness values), as can
be seen in Figure~\ref{fig:gibbs-zipf-results} and in
Section~\ref{sec:measure-tau}.

\begin{figure}
	\begin{center}
		\begin{tabular}{cc}
			in natural order & ordered by the value
			\tabularnewline
\includegraphics[width=0.4\columnwidth]{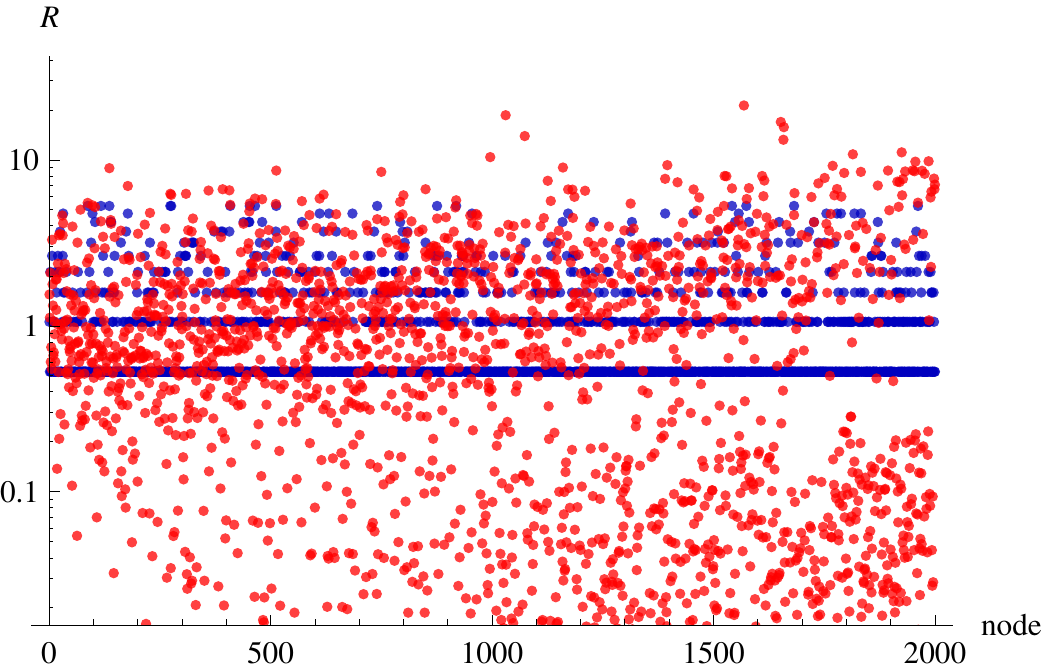} 
	& \includegraphics[width=0.4\columnwidth]{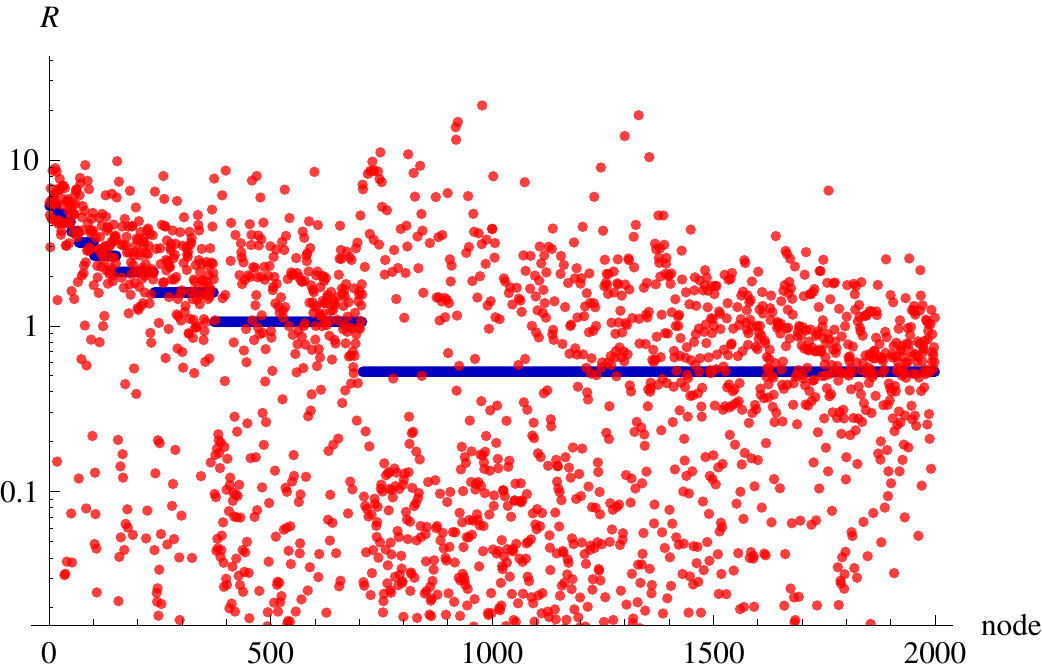}
			\tabularnewline
		\end{tabular}
		\caption{\label{fig:gibbs-zipf-results} The extracted
                  realization $\widehat{\overline r}$ (in red) versus
                  the true realization $\overline{r}$ (in blue), with
                  two different orderings, for the normalized discrete
                  Zipf's distribution with exponent $2$ and $10$
                  values. The empirical mean of the
         		  the first $\frac{n}{2}$ extracted values is $1.25$.}
	\end{center}
\end{figure}

We conclude this section noting that, for each of the experiments, the
Monte Carlo algorithm looks for the values of the fitness parameters
on the whole positive real line.  We would obtain better outputs if we
could restrict the research on a suitable interval for each case,
assuming a partial knowledge of the shape of the distribution.

\subsection{Analysis of the ordering of the nodes} 
\label{sec:measure-tau}

In a real application, we may content ourselves in finding not the
realized fitness parameters themselves but rather their ordering, that
is, the ordering of the nodes from larger to smaller values of the
fitness parameter.  To evaluate if we can at least extract values
$\widehat{r}_i$ that respect this ordering, we decided to compare the
drawn vector $\widehat{\overline{r}}$ with the true realization
$\overline{r}$ by the use of Kendall's $\tau$ and some variants of it.

To keep track of the fact that, as said before, the first nodes
contain more information than the last ones, we evaluated Kendall's
$\tau$ not only on the whole vector but also on a short initial
segment of size $k_n=n/2$ or $k_n=\sqrt{n}$. Besides this, we tried to
use a variant of Kendall's $\tau$ (proposed in \cite{vigna-tau}), that
we apply in two separate and different ways:

\begin{enumerate}
	\item inducing a hyperbolic decay based on the position of the
          nodes -- that is, weighting more the first (the oldest)
          nodes, and less the last (the youngest) ones;
	\item inducing a hyperbolic decay based on the true realized
          values $r_i$ -- that is, assigning a higher weight to the
          nodes with a greater fitness parameter $r_i$.
\end{enumerate}

\begin{table}
	\begin{center}
\begin{tabular}{|c|c|c|c|}
\hline 
Considered nodes & Kendall's $\tau$ & $\tau$ weighted by
position & $\tau$ weighted by value \tabularnewline \hline \hline
$k_n=\lfloor\sqrt{n}\rfloor=44$ & .281 & .206 & .463 \tabularnewline
\hline $k_n=\frac{n}{2}=1000$ & .229 & .188 & .337 \tabularnewline \hline
$k_n=n=2000$ & .150 & .139 & .155 \tabularnewline \hline
\end{tabular}
		\caption{\label{table:gibbs1-kendall} Comparing
                  orderings induced by the true realization
                  $\overline{r}$ versus the extracted one 
                  $\widehat{\overline{r}}$ in the case of the uniform
                  distribution on the interval~$[0.5,1.5]$.  }
	\end{center}
\end{table}

\begin{table}
	\begin{center}
\begin{tabular}{|c|c|c|c|}
\hline Considered nodes & Kendall's $\tau$ & $\tau$ weighted by
position & $\tau$ weighted by value \tabularnewline \hline \hline
$k_n=\lfloor\sqrt{n}\rfloor=44$ & .676 & .593 & .713 \tabularnewline
\hline $k_n=\frac{n}{2}=1000$ & .586 & .585 & .625 \tabularnewline \hline
$k_n=n=2000$ & .438 & .477 & .434 \tabularnewline \hline
\end{tabular}
		\caption{\label{table:gibbs2-kendall} Comparing
                  orderings induced by the true realization
                  $\overline{r}$ versus the extracted one 
                  $\widehat{\overline{r}}$ in the case of the uniform
                  discrete distribution on the two
                  values~$\{0.25,1.75\}$.  }
	\end{center}
\end{table}

\begin{table}
	\begin{center}
\begin{tabular}{|c|c|c|c|}
\hline Considered nodes & Kendall's $\tau$ & $\tau$ weighted by
position & $\tau$ weighted by value \tabularnewline \hline \hline
$k_n=\lfloor\sqrt{n}\rfloor=44$ & .735 & .762 & .772 \tabularnewline
\hline $k_n=\frac{n}{2}=1000$ & .453 & .516 & .803 \tabularnewline \hline
$k_n=n=2000$ & .313 & .402 & .543 \tabularnewline \hline
\end{tabular}
		\caption{\label{table:gibbs-zipf-kendall} Comparing
                  orderings induced by the true realization
                  $\overline{r}$ versus the extracted one
                  $\widehat{\overline{r}}$ in the case of the
                  normalized discrete Zipf's distribution with
                  exponent $2$ and $10$ values.  }
	\end{center}
\end{table}

The results of these measures are summarized in Table
\ref{table:gibbs1-kendall} for the experiment with the uniform
distribution on an interval, in Table \ref{table:gibbs2-kendall} for
the experiment with the uniform discrete distribution on the two
values~$\{0.25,1.75\}$, and in Table \ref{table:gibbs-zipf-kendall} for
the discrete Zipf's distribution with $10$ values and exponent $2$.
The tables show that, although we are unable to reconstruct the actual
realized values of the fitness parameters, our approach actually
recovers some information about node ranking.  As already seen before,
the output of the Monte Carlo algorithm is better for the discrete
cases.

\section{From the attribute structure to the graph}
\label{sec:graph}

We now extend the model to produce a graph out of the attribute
structure (that may itself be latent and unknown). In general, we may
assume that the presence of an edge between two nodes depends on the
features that those nodes exhibit, but there are many nuances to this
idea and possible approaches.

In the sequel, we postulate that the connections are undirected (we
omit self-loops, i.e., edges of type $(i,i)$) and we denote the
adjacency matrix (symmetric by assumption) by $A$. 

\subsection{Feature/Feature probability model (FF)}
In the first, basic model, we assume that the probability of having an
edge $(i,j)$ depends \emph{solely} on the features that $i$ and $j$
possess; each pair of feature vectors that node $i$ and node $j$ exhibit
contributes in tuning the edge probability.  In other words, letting
$L_n$ be the total number of different features (i.e., columns of
$Z$), we assume that there is a symmetric feature-feature influence
matrix $\Xi=(\xi_{h,k})_{1\leq h,k\leq L_n}$ that determines a
node-node weight matrix $W$ given by
\[
	W=Z \cdot \Xi \cdot Z^T
\]
or, more explicitly,
\[
	w_{i,j}=\sum_{h,k} Z_{i,h} \xi_{h,k} Z_{j,k}.
\] 
The probability of the presence of an edge $(i,j)$, then, depends
monotonically on $w_{i,j}$.  The choice of $\Xi$ determines different
relations between features and edge pro\-ba\-bi\-li\-ties.  If $\xi_{h,k}>0$
(resp., $\xi_{h,k}<0$), then the simultaneous presence of attributes
$h$ and $k$ increases (resp., decreases) the edge probability; if
$\xi_{h,k}=0$, the simultaneous presence of attributes $h$ and $k$
does not affect the edge-probability. In particular, if $\xi_{h,k}=0$
for $h\neq k$, then the edge-probability is affected only by the
presence of the same attributes in both nodes (positively or negatively
affected depending on the sign of $\xi_{h,h}$).

The actual probabilities are computed as some function applied to the
corresponding weight; i.e., some monotone function $\Phi: {\mathbf R} \to
[0,1]$ is fixed and, for $1\leq j<i\leq n$, 
\begin{equation}\label{FF-model}
	P(A_{i,j}=1|Z)=\Phi(w_{i,j})=\Phi\left(\sum_{h,k} Z_{i,h} \xi_{h,k}
	Z_{j,k}\right).
\end{equation}

\subsection{Feature/Feature+BA probability model (FFBA)}
A variant of the feature/feature (FF) probability model takes into
account the fact that some edges exist independently of the features
that the involved nodes exhibit, but they are there simply because of
the popularity of a node, as in the traditional ``preferential
attachment'' model by Barab\'asi and Albert~\cite{BarabasiAlbert}.  To
take this into consideration, instead of using (\ref{FF-model}), we
rather define for $1\leq j<i\leq n$
\begin{equation}\label{FF+BA-model}
	P(A_{i,j}=1|Z,\,D_j(i-1),\, m(i-1))=
\delta\Phi\left(\sum_{h,k} Z_{i,h} \xi_{h,k} Z_{j,k}\right)+
(1-\delta)\frac{D_j(i-1)}{2 m(i-1)},
\end{equation}
where $D_j(k)$ and $m(k)$ are, respectively, the degree of node $j$
and the overall number of edges just after node $k$ was added. The
parameter $\delta$ controls the mixture between the pure
feature/feature model and the preferential-attachment model
(degenerating to the first one when $\delta=1$, and to the second one
when $\delta=0$).

\subsection{Feature/feature+JR probability model (FFJR)}
Jackson and Rogers~\cite{JR} observed that preferential attachment can
be induced also injecting a ``friend-of-friend'' approach in the
creation of edges.  Their behavior can be mimicked in our model as
follows: we first generate a graph with adjacency matrix $A'$ using
the pure FF model, i.e., letting
\[
	P(A'_{i,j}=1|Z)=\Phi(w_{i,j})=\Phi\left(\sum_{h,k} Z_{i,h} \xi_{h,k}
	Z_{j,k}\right).
\]
After this, every node $i$ looks at the set of the neighbors of its
neighbors, according to $A'$. If this set is not empty, it then selects
one node from the set uniformly at random; the resulting node is chosen as an
``extra'' friend of $i$ with some probability $1-\delta$ (for suitably
chosen $\delta \in [0,1]$). The adjacency matrix obtained in this way
is $A$.  Once more, if $\delta=0$ we have $A=A'$ so we get back to the
pure FF model.

\section{Simulations for the graph structure}
\label{sec:graph-simulations}
The purpose of this collection of experiments is to determine the
topology of the graph generated with the models described above.  We
fix \emph{a priori} the number of nodes $n$ and the (approximate)
number of edges $m$ (i.e., density) we aim at; then, every experiment
consists essentially in two phases:
\begin{itemize}
  \item generating an attribute matrix $Z$ for $n$ nodes (with certain
    values for the parameters $\alpha$ and $\beta$ and with $R_i$
    uniformly distributed on the interval $[\epsilon,2-\epsilon]$
    for a certain $\epsilon$);
  \item building the graph according to one of the models described in
    Section~\ref{sec:graph}.
\end{itemize} 
The second phase needs to fix some further parameters: $\Xi$ (the
feature/feature influence matrix), the function $\Phi$ and, for the
mixed models (FFBA and FFJR), the parameter $\delta$.

For the sake of simplicity, throughout this section, we assume that
$\Xi=I$ and we take $\Phi$ as a sigmoid function given by
\[
	\Phi(x) = \frac{1}{e^{K(\theta-x)}+1}.
\]
In other words, the existence of an edge $(i,j)$ depends simply on the number
of features that $i$ and $j$ share (this is an effect of choosing
$\Xi=I$). More features induce larger probability: the sigmoid
function smoothly increases (from $0$ to $1$) around a threshold $\theta$,
and $K>0$ controls its smoothness; when $K\to \infty$ we obtain a step
function and edges are chosen deterministically based on whether the
two involved nodes share more than $\theta$ features or not.
 
In the experiments, we fix $K$ and determine $\theta$ on the basis of
the desired density of the graph (or, equivalently, the desired number
of edges $m$); in practice\footnote{The described method needs some
  (obvious) adjustments when applied to the mixed models, to take into
  account the edges generated by preferential attachment.}, this is
obtained by solving numerically the equation
\[
E\left[\sum_{1\leq j<i\leq n} A_{i,j}\right]
=	
\sum_{1\leq j<i\leq n}\Phi\left(\sum_{h,k} Z_{i,h} \xi_{h,k} Z_{j,k}\right) 
= 
m
\]
for the indeterminate $\theta$ (using, for example, Newton's
method). Since $\Xi=I$ the equation in fact simplifies into
\[
\sum_{1\leq j<i\leq n}\frac{1}{e^{K(\theta-\sum_{h} Z_{i,h} Z_{j,h})}+1} = m.
\]

\smallskip
With these assumptions, every experiment depends on the parameters
used for generating $Z$ (i.e., $\alpha$, $\beta$ and $\epsilon$), on
$K$ (that controls the smoothness of the sigmoid function) and on
$\delta$ (for the mixed models). In the graphs produced by each
simulation, we took into consideration the degree distribution, the
percentage of reachable pairs (i.e., the fraction of pairs of nodes
that are reachable) and the distribution of distances (lengths of
shortest paths); the latter data are computed using a probabilistic
algorithm~\cite{HyperANF}.

Some of the results obtained (for $n=2\,000$ and\footnote{We observed
absolutely analogous phenomena also for larger and denser networks;
we hereby report only the smaller case for the sake of readability
of the pictures.} $m=4\,000$) for the FF model are shown in
Figure~\ref{fig:expgraph-ff}. For those experiments, the underlying
attribute matrix is generated with $\beta=0.75$ and
$R_i$ uniformly distributed on the interval $[0.75, 1.25]$;
we compare $\alpha=3$ (resulting in $\approx
1200$ features) with $\alpha=10$ ($\approx 4000$ features).
Results regarding mixed models are reported in Figure~\ref{fig:expgraph-ffmix}.
  
The properties of the obtained graphs can be summarized as follows:
\begin{itemize}
  \item the pure FF model exhibits a behavior that strongly depends on the
  smoothness parameter $K$ (see Fig. \ref{fig:expgraph-ff}):
  \begin{itemize}
    \item for $K=1$, the degree distribution is power-law only when
      $\alpha$ is large (e.g., $\alpha=10$), whereas the distribution
      is often non-monotonic for smaller $\alpha$'s, especially on
      large graphs; the fraction of reachable pairs is quite large
      (between $40\%$ and $90\%$);
    \item for $K=4$, degrees are always distributed as a power-law
      (with exponents around $3$), but the graph becomes largely
      disconnected (the reachable pairs are never more than $20\%$):
      this is because nodes with the same degree tend to stick
      together (assortativity), forming a highly connected component
      and leaving the remaining nodes isolated;
    \item for $K\to \infty$, the power-law distribution of degrees is
      even more clear-cut, but the number of reachable pairs becomes
      smaller (no more than $10\%$); the exponent of the power-law
      distribution depends on $\alpha$, with larger $\alpha$'s
      yielding larger absolute values of the (negative) exponents;
  \end{itemize}
  \item the FFBA model (see Fig. \ref{fig:expgraph-ffmix}) increases
    slightly the number of reachable pairs in all cases; the shape of
    the power-law distribution is essentially unchanged with respect
    to the pure FF model;
  \item finally, for the FFJR model (see
    Fig. \ref{fig:expgraph-ffmix}) we observe a reduced connectivity;
    this is due to holding the expected number of edges as a constant,
    while devoting some of them to closing triangles -- an operation
    that cannot increase connectivity. The degree distribution seems
    closer to a power-law with respect to the pure FF model.
 \end{itemize}
 
 \newcolumntype{V}{>{\centering\arraybackslash} m{.75\linewidth} }
\newcolumntype{v}{>{\centering\arraybackslash} m{.1\linewidth} }

\begin{figure}[htb]
\begin{center}
\begin{tabular}{v|cV}
\multirow{3}{*}{} & $K=1$ & \includegraphics[width=0.75\columnwidth]
{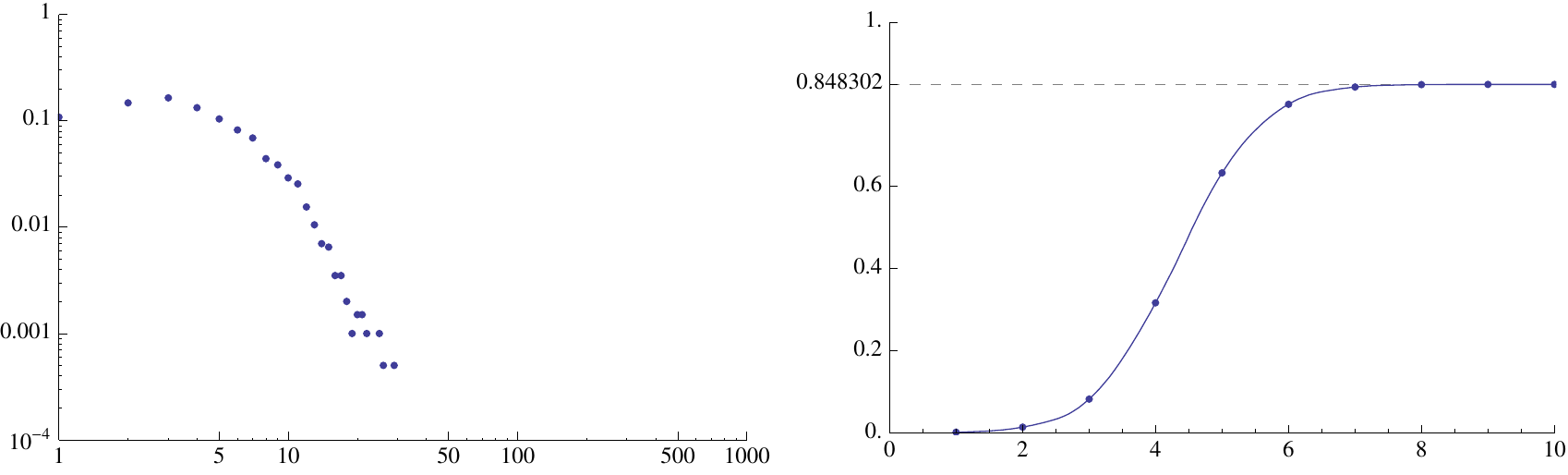} \\ 
$\alpha=3$ & $K=4$ & \includegraphics[width=0.75\columnwidth]
{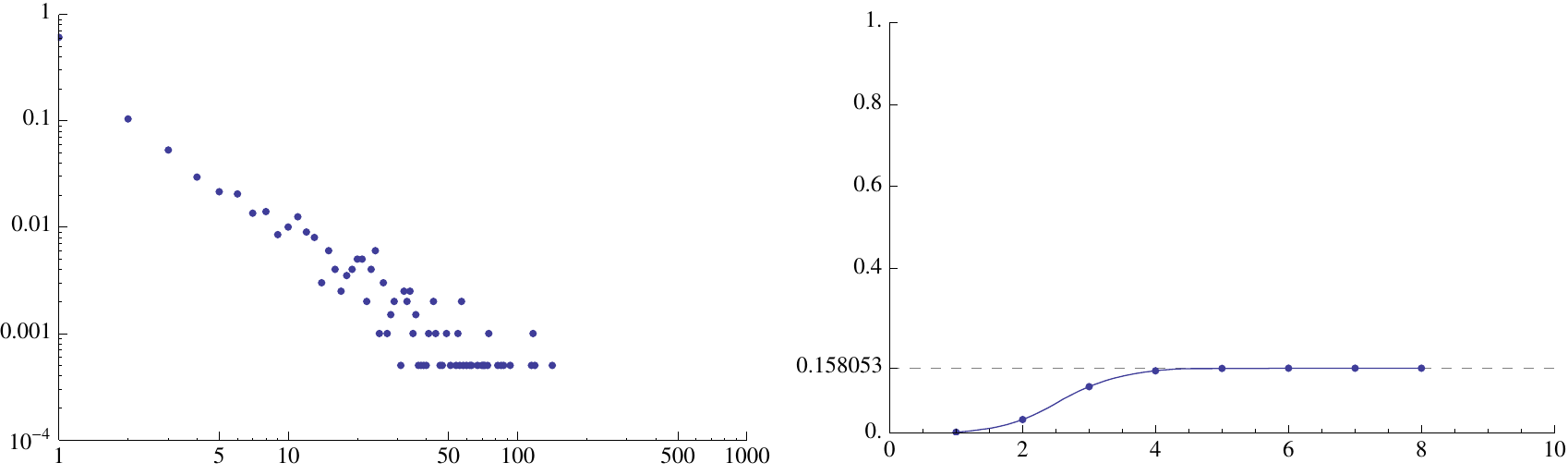} \\ 
& $K\rightarrow \infty$ & \includegraphics[width=0.75\columnwidth]
{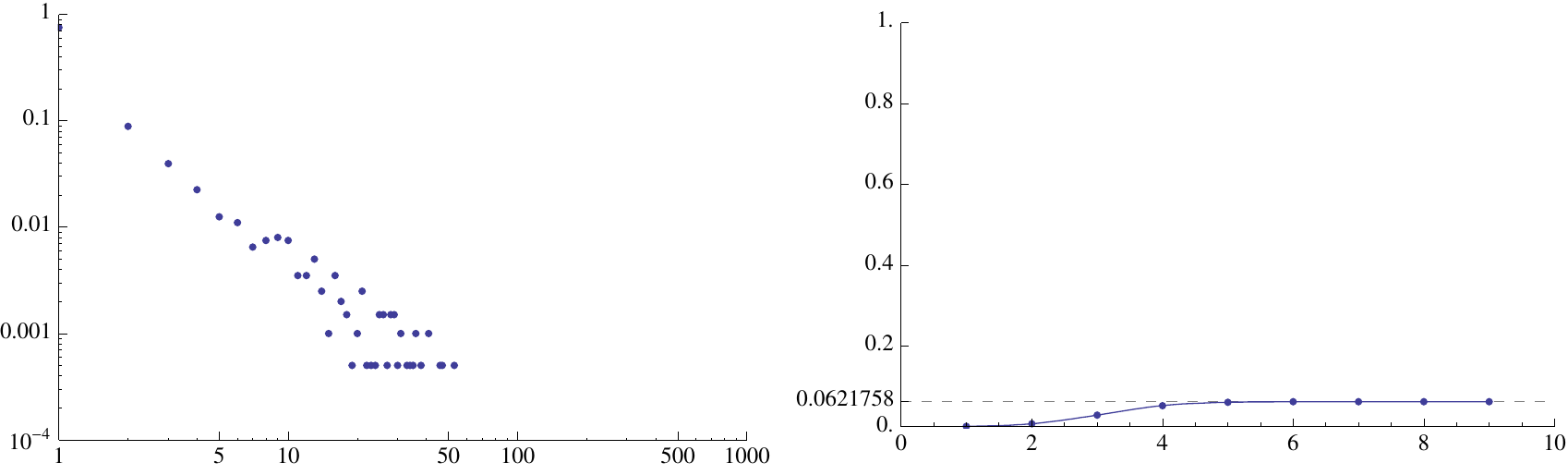} \\ 
\hline
\multirow{3}{*}{} & $K=1$ & \includegraphics[width=0.75\columnwidth]
{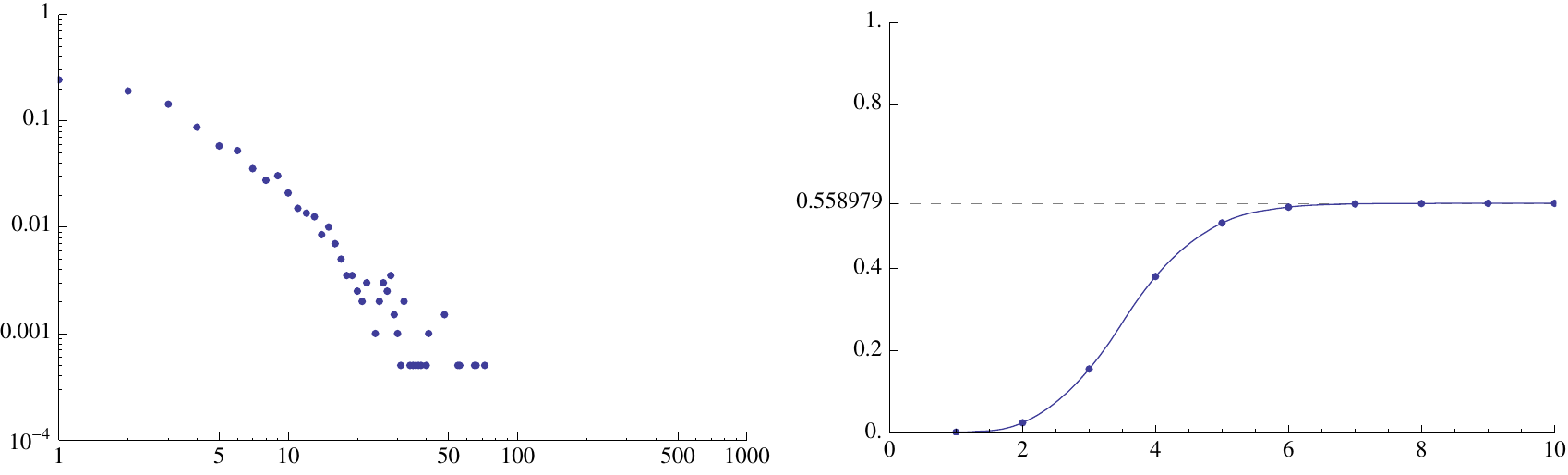} \\ 
$\alpha=10$ & $K=4$ & \includegraphics[width=0.75\columnwidth]
{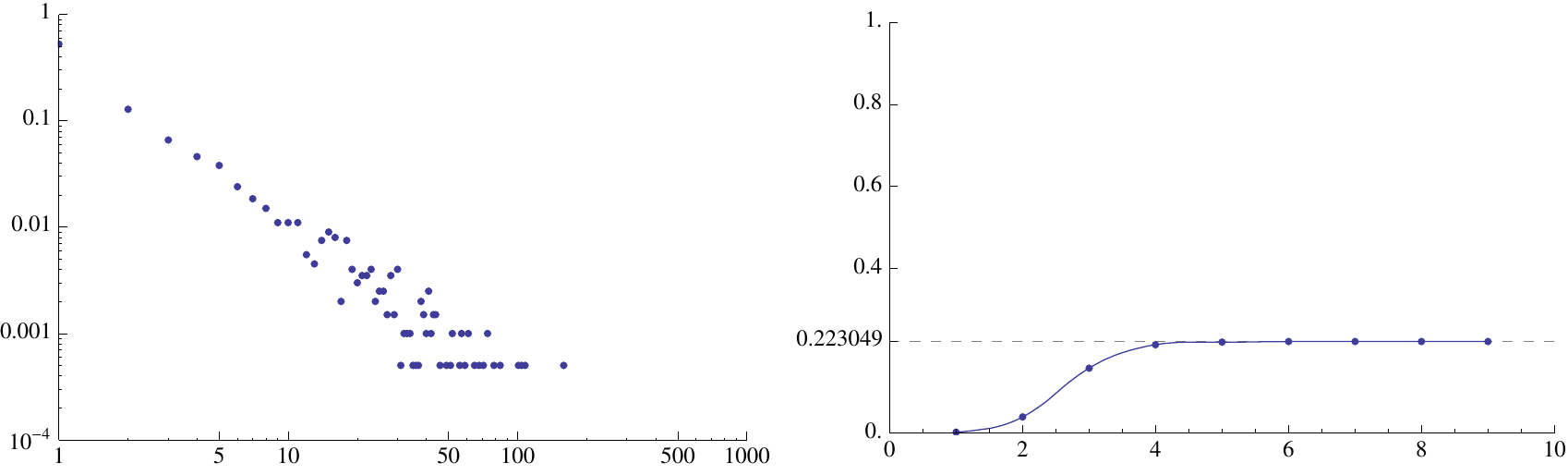} \\ 
& $K\rightarrow \infty$ & \includegraphics[width=0.75\columnwidth]
{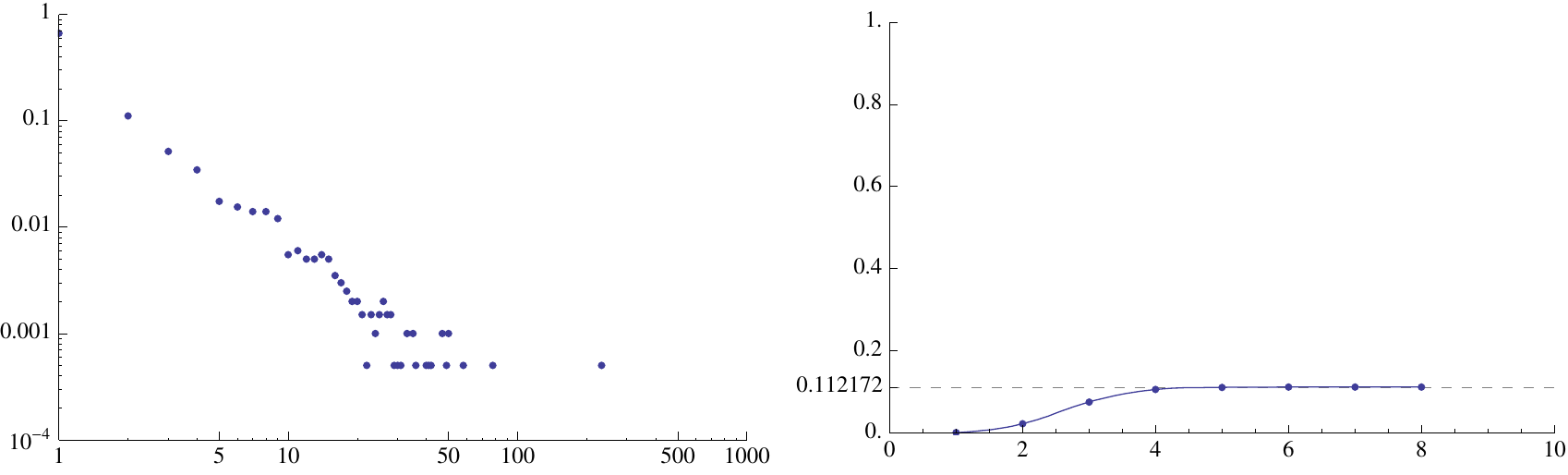} \\
\end{tabular} 
\caption{\label{fig:expgraph-ff} \footnotesize{Properties of graphs
    generated by the FF model.  We show the degree distribution in a
    log-log plot and the fraction of pairs at distance at most $k$; in
    the latter, we highlight the peak value (fraction of mutually
    reachable pairs).}}
\end{center}
\end{figure}

\begin{figure}[htb]
\begin{center}
\begin{tabular}{cV}
FFBA &
\includegraphics[width=0.75\columnwidth]{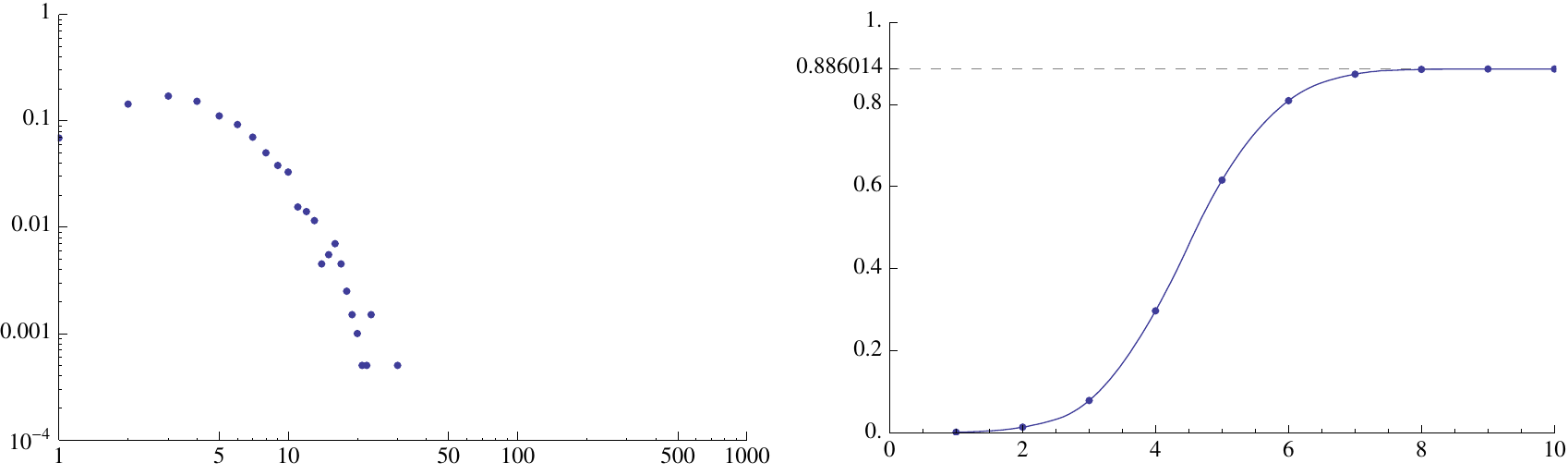}
\\
\hline
FFJR &
\includegraphics[width=0.75\columnwidth]
{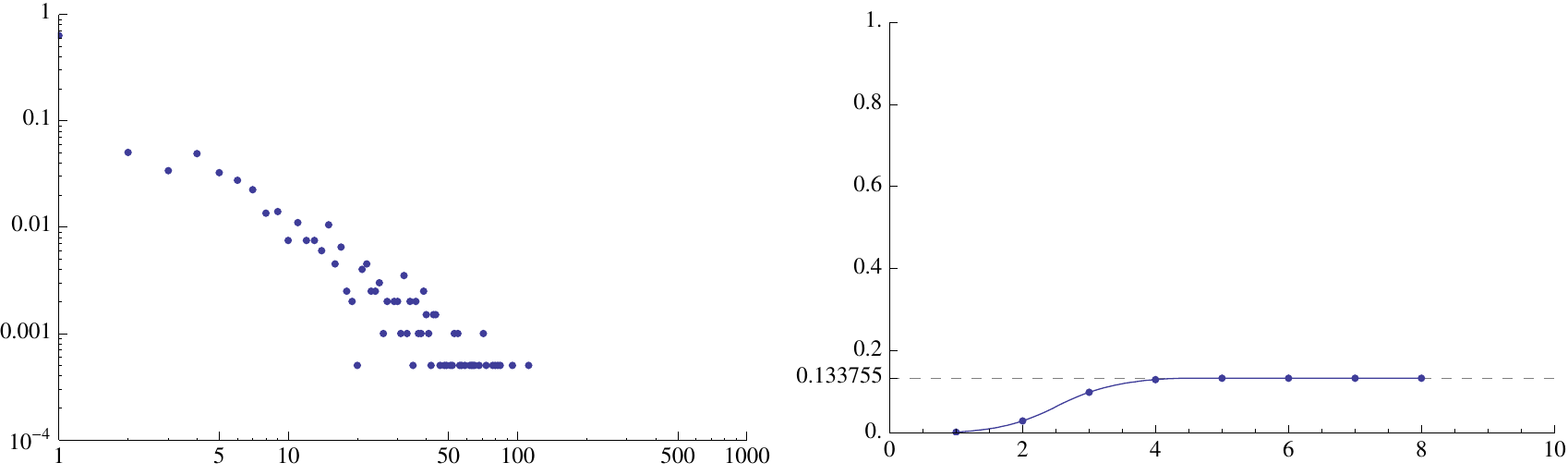}
\\
\end{tabular} 
\caption{\label{fig:expgraph-ffmix} Properties of graphs generated by
  mixed models with $K=1$ and $\delta=0.75$.  We show the degree
  distribution in a log-log plot and the fraction of pairs at distance
  at most $k$; in the latter, we highlight the peak value, indicating
  how many pairs of nodes are mutually reachable. The parameters of
  the underlying attribute-matrix model are $\alpha=3$ and
  $\beta=0.75$ and the $R_i$'s are uniformly distributed on
  $[0.75,1.25]$.}
\end{center}
\end{figure}

\section{A real dataset}
\label{real-data}

We considered a dataset of scientific papers\footnote{The dataset is
  available within the SNAP (Stanford Large Network Dataset
  Collection) at
  \texttt{http://snap.stanford.edu/data/cit-HepTh.html}.}  (originally
released as part of the 2003 KDD Cup) consisting of $27\,770$ papers
from the ``High energy physics (theory) arXiv'' database.  For each
paper (node), we considered as features the words appearing in its
title and abstract, excluding those that are dictionary
words\footnote{According to the Unix \texttt{words} dictionary.}.  The
papers are organized in order of publication date.

In Figure~\ref{fig:real-attribute-matrix} the reader can see a
fragment of the attribute matrix (for the first $500$ nodes and the
features they exhibit).

The overall number of features is $21\,933$, with a matrix density of
$0.35\cdot 10^{-3}$ (there are $214\,510$ ones in the matrix). The
estimated values of $\alpha'$ and $\beta$ are $15.038$ and $0.671$,
respectively.  In particular, we recall that $\beta$ is the power-law
exponent of the asymptotic behavior of $L_n$, i.e. the overall number
of distinct attributes. We show the estimate for this real case in
Figure~\ref{fig:slope-beta-real-case}. A recostruction of the ordering
of the nodes according to their fitness parameter values is possible,
but we lack any ground truth to compare it to.

\begin{figure}[H]
\centering
\begin{subfigure}{.45\textwidth}
  \centering
  \includegraphics[scale=0.5]{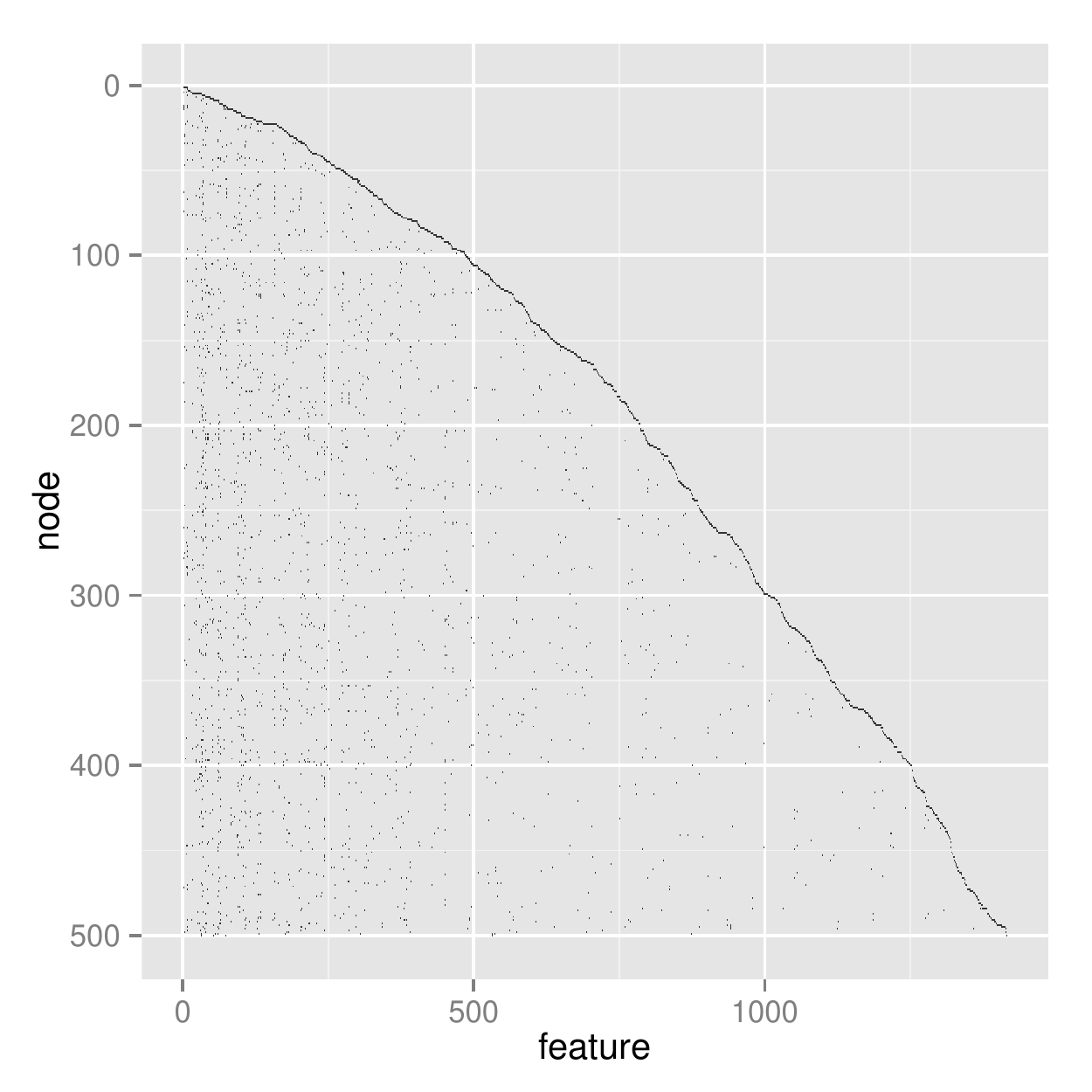} 
  \caption{\footnotesize  The first $500$ rows (nodes) of the attribute matrix.}
  \label{fig:real-attribute-matrix}
\end{subfigure}
\quad
\begin{subfigure}{.45\textwidth}
  \centering
  \includegraphics[scale=0.4375]{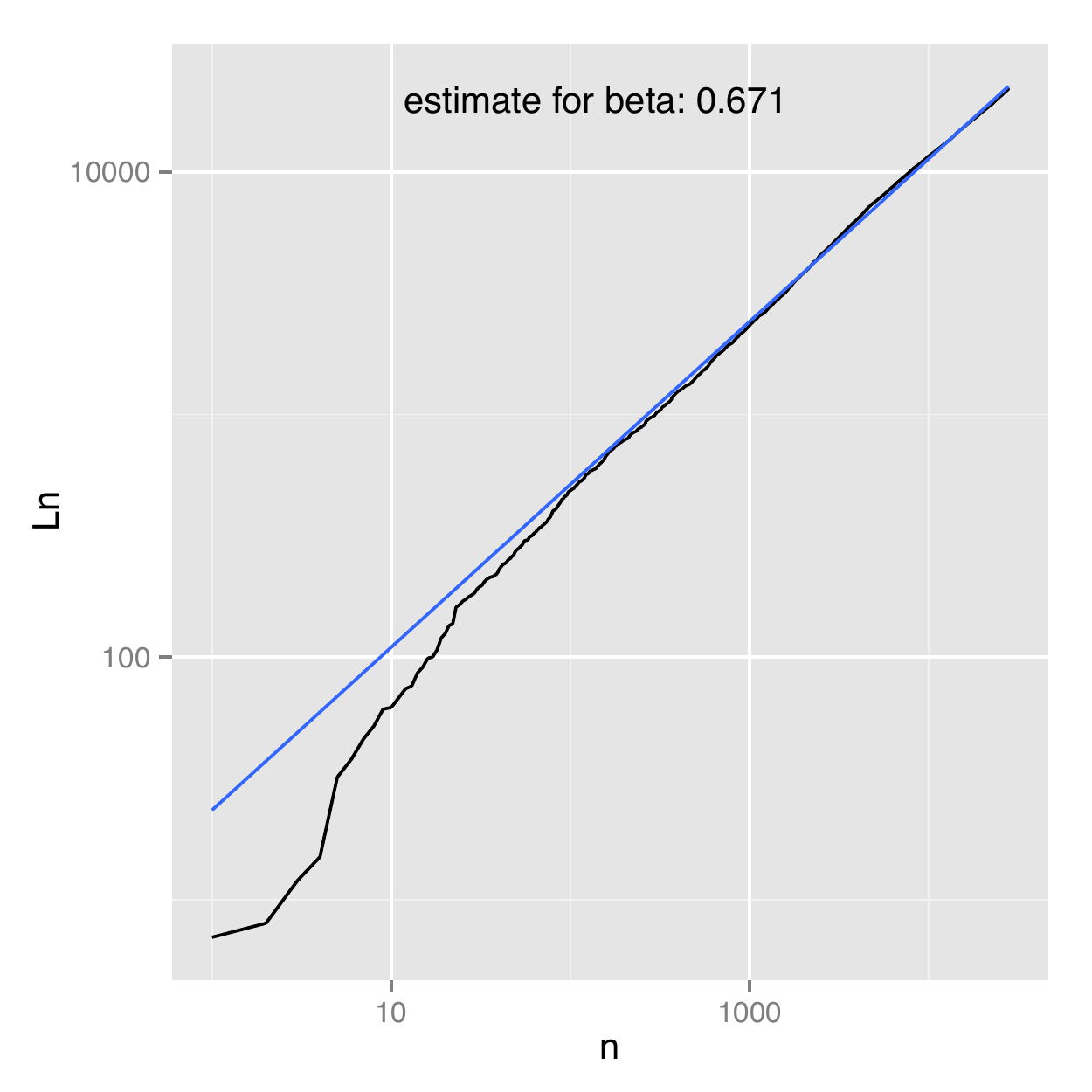}
  \caption{\footnotesize Correspondence between the parameter $\beta$
    and the power-law exponent of $L_n$, as a function of $n$. The
    estimate of $\beta$ is the slope of the regression line.}
  \label{fig:slope-beta-real-case}
\end{subfigure}
\caption{Analysis of the \texttt{cit-HepTh} dataset. }
\end{figure}

We conclude this section with a comparison between the graph produced
by the FF model using as the underlying matrix the attribute matrix of
the \texttt{cit-HepTh} dataset and the corresponding (symmetrized)
citation graph. After some experiments, we observed that we can obtain
a good fit with $K=2.5$, that produces a quite similar degree and
distance distribution (see Figure~\ref{fig:graph-analysis-real}). It
is striking to observe that the two graphs have such a strong
similarity in their topology, albeit having positively no direct
relation with each other (in one case the edges represent citations,
in the other they were obtained by the model basing on the textual
similarity of their abstracts!).

\begin{figure}[htb]
\begin{center}
\begin{tabular}{cV}
\texttt{cit-HepTh} 
& \includegraphics[width=0.75\columnwidth]{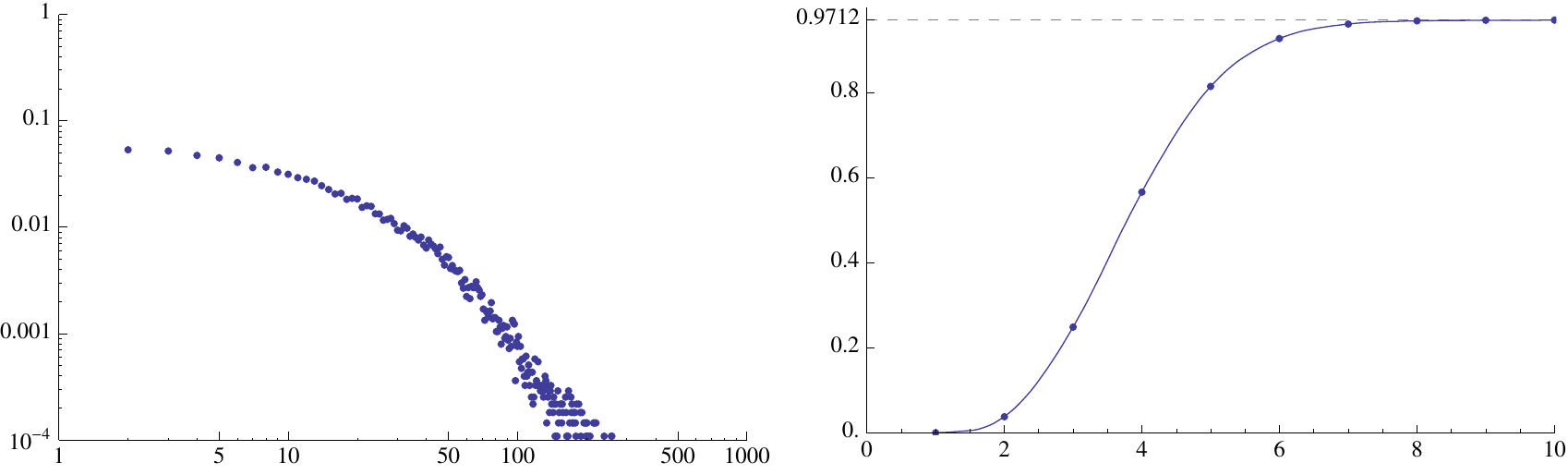}
\\
\hline
FF model &
\includegraphics[width=0.75\columnwidth]{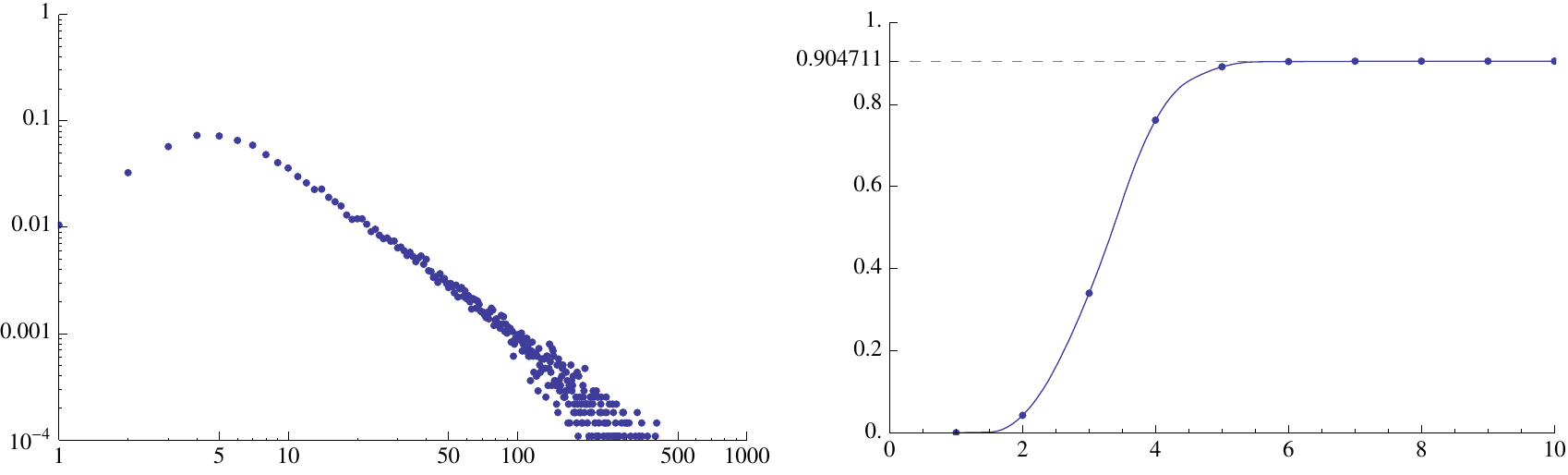}\\
\end{tabular} 
\caption{\label{fig:graph-analysis-real}
	Comparison of the \texttt{cit-HepTh} dataset
	\emph{versus} a graph generated by the FF model
	applied to the real feature matrix.
	We show the degree distribution in a log-log plot,
	and the fraction of pairs at distance at most $k$;
	in the latter, we highlight the peak value, indicating
	how many pairs of nodes are mutually reachable.
}
\end{center}
\end{figure}

\section{Conclusions}
\label{conclusions}

In this paper we introduce and study a network model that combines two
features:
\begin{enumerate}
 \item Behind the adjacency matrix of a network there is a {\em latent
   attribute structure} of the nodes, in the sense that each node is
   characterized by a number of features and the probability of the
   existence of an edge between two nodes depends on the features they
   share. 

\item Not all nodes are equally successful in transmitting their own
  attributes to the new nodes ({\em competition}). Each node $n$ is
  characterized by a random fitness parameter $R_n$ describing its
  ability to transmit the node's attributes: the greater the value of
  the random variable $R_n$, the greater the probability that a
  feature of $n$ will also be a feature of a new node, and so the
  greater the probability of the creation of an edge between $n$ and
  the new node.  Consequently, a node's connectivity does not depend
  on its age alone (so that also ``young'' nodes are able to compete
  and succeed in acquiring links).
 \end{enumerate}

Our work has different merits: firstly, we propose a simple model for
the latent bipartite ``node-attribute'' network, where the role played
by each single parameter is straightforward and easy to 
interpret: specifically, we have the two parameters, $\alpha$ and
$\beta$, that control the number of new attributes each new node
exhibits (in particular, $\beta>0$ tunes the power-law behavior of
the total number of distinct observed features); whereas the fitness
parameters $R_i$ impact on the probability of the new nodes to inherit
the attributes of the previous nodes.  Secondly, unlike other network
models based on the standard Indian Buffet Process, we take into
account the aspect of competition and, like in~\cite{bcpr-ibm}, we
introduce random fitness parameters so that nodes have a different
relevance in transmitting their features to the next nodes; finally,
we provide some theoretical, as well experimental, results regarding
the power-law behavior of the model and the estimation of the
parameters. By experimental data, we also show how the proposed model
for the attribute structure naturally leads to a complex network
model.

The comparison with real datasets is promising: our model seems to
produce quite realistic attribute matrices while at the same time
capturing most local and global properties (e.g., degree
distributions, connectivity and distance distributions) real networks
exhibit.

Some possible future developments are the following. First, we could
introduce another parameter $c\geq 0$ in the model of the
node-attribute bipartite network so that the inclusion probabilities
are
\begin{gather*}
P_n(k)=\frac{\sum_{i=1}^n R_i Z_{i,k}}{c+\sum_{i=1}^n R_i}
\end{gather*}
(we now have $c=0$): the bigger $c$, the smaller the inclusion
probabilities and so the sparser the attributes. This can allow to
obtain attribute matrices that are sparser. To this purpose, we note
that the proofs of the theoretical results regarding the estimation of
$\alpha$ and $\beta$ do not change and so, for this aspect, we have no
problem. The problem is, instead, in the fact that we have an
additional parameter to estimate.

Second, a possible variant of the feature/feature (FF) model is to
consider, for each incoming new node $i$, a feature/feature influence
matrix $\Xi(i)$ which depends on $i$: for instance, a diagonal matrix with 
$$
\xi_{k,k}(i)=\frac{1}{\sum_{\ell=1}^{i-1} Z_{\ell,k}}
 $$ so that the edge-probability is smaller as the number of nodes
with $k$ as a feature is larger.\\

\noindent {\bf Acknowledgments}

Paolo Boldi and Corrado Monti acknowledge the EU-FET grant NADINE (GA
288956).  They also would like to thank Andrea Marino for useful
discussions.  

Irene Crimaldi acknowledges support from CNR PNR Project
``CRISIS Lab''.  Moreover, she is a member of the Italian group
``Gruppo Nazionale per l'Analisi Matematica, la Probabilit\`a e le
loro Applicazioni (GNAMPA)'' of the Italian Institute ``Istituto
Nazionale di Alta Matematica (INdAM)''.



\bibliographystyle{plain}
\bibliography{mybib}

\begin{thebibliography}{10}

\bibitem{AielloEtal}
William Aiello and Fan Chung.
\newblock Random evolution in massive graphs.
\newblock In {\em Proc. of the 42Nd IEEE Symposium on Foundations of Computer
  Science}, FOCS '01, pages 510--, Washington, DC, USA, 2001. IEEE Computer
  Society.

\bibitem{mixed-membership}
Edoardo~M. Airoldi, David~M. Blei, Stephen~E. Fienberg, and Eric~P. Xing.
\newblock Mixed membership stochastic blockmodels.
\newblock {\em J. Mach. Learn. Res.}, 9:1981--2014, June 2008.

\bibitem{BarabasiAlbert}
Albert-L\'aszl\'o Barab\'asi and R\'eka Albert.
\newblock Emergence of scaling in random networks.
\newblock {\em Science}, 286:509--512, October 1999.

\bibitem{BCL}
Federico Bassetti, Irene Crimaldi, and Fabrizio Leisen.
\newblock Conditionally identically distributed species sampling sequences.
\newblock {\em Advances in Applied Probability}, 42(2):433--459, 06 2010.

\bibitem{bcpr-ibm}
Patrizia Berti, Irene Crimaldi, Luca Pratelli, and Pietro Rigo.
\newblock Central limit theorems for an indian buffet model with random
  weights.
\newblock {\em The Annals of Applied Probability}, 2014.
\newblock \ (forthcoming). Currently available on
  \text{http://imstat.org/aap/future\_papers.html} and on arXiv (1304.3626,
  2013).

\bibitem{BB}
Ginestra Bianconi and Albert-L\'aszl\'o Barab\'asi.
\newblock Competition and multiscaling in evolving networks.
\newblock {\em EPL (Europhysics Letters)}, 54(4):436, 2001.

\bibitem{breiger}
Ronald~L. Breiger.
\newblock {The Duality of Persons and Groups}.
\newblock {\em Social Forces}, 53(2):181--190, 1974.

\bibitem{BJP}
Tamara Broderick, Michael~I. Jordan, and Jim Pitman.
\newblock Beta processes, stick-breaking and power laws.
\newblock {\em Bayesian Analysis}, 7(2):439--476, 06 2012.

\bibitem{Gibbs-tutorial}
George Casella and Edward~I George.
\newblock Explaining the gibbs sampler.
\newblock {\em The American Statistician}, 46(3):167--174, 1992.

\bibitem{CSNPLDED}
Aaron Clauset, Cosma~Rohilla Shalizi, and M.~E.~J. Newman.
\newblock Power-law distributions in empirical data.
\newblock {\em SIAM Rev.}, 51(4):661--703, November 2009.

\bibitem{ErdosRenyi}
P.~Erd\H{o}s and A.~R\'enyi.
\newblock {On the evolution of random graphs}.
\newblock {\em Publ. Math. Inst. Hung. Acad. Sci}, 5:17--61, 1960.

\bibitem{simulated-annealing}
Stuart Geman and Donald Geman.
\newblock Stochastic relaxation, gibbs distributions, and the bayesian
  restoration of images.
\newblock {\em IEEE Trans. Pattern Anal. Mach. Intell.}, 6(6):721--741, 1984.

\bibitem{MCMC-ML}
Charles~J. Geyer and Minnesota Univ. (Minneapolis School~Of Statistics).
\newblock {\em {Markov Chain Monte Carlo Maximum Likelihood.}}
\newblock Defense Technical Information Center, 1992.

\bibitem{gha}
Zoubin Ghahramani.
\newblock {B}ayesian nonparametrics and the probabilistic approach to
  modelling.
\newblock {\em Philosophical Transactions of the Royal Society A},
  371(20110553), 2012.

\bibitem{GG06}
Thomas~L. Griffiths and Zoubin Ghahramani.
\newblock Infinite latent feature models and the indian buffet process.
\newblock In {\em Advances in Neural Information Processing Systems}, pages
  475--482. MIT Press, 2005.

\bibitem{GG11}
Thomas~L. Griffiths and Zoubin Ghahramani.
\newblock The indian buffet process: An introduction and review.
\newblock {\em J. Mach. Learn. Res.}, 12:1185--1224, 2011.

\bibitem{HH}
Peter Hall and C.~C. Heyde.
\newblock {\em Martingale Limit Theory and Its Application}.
\newblock Academic Press, New York,~NY, 1980.

\bibitem{latent-factor-models}
Peter~D. Hoff.
\newblock Multiplicative latent factor models for description and prediction of
  social networks.
\newblock {\em Computational and Mathematical Organization Theory},
  15(4):261--272, 2009.

\bibitem{general-sna-book}
Matthew~O. Jackson.
\newblock {\em Social and Economic Networks}.
\newblock Princeton University Press, Princeton, NJ, USA, 2008.

\bibitem{JR}
Matthew~O. Jackson and Brian~W. Rogers.
\newblock Meeting strangers and friends of friends: How random are social
  networks?
\newblock {\em American Economic Review}, 97(3):890--915, 2007.

\bibitem{KING}
John F.~Charles Kingman.
\newblock Completely random measures.
\newblock {\em Pacific Journal of Mathematics}, 21(1):59--78, 1967.

\bibitem{KumarEtal}
Ravi Kumar, Jasmine Novak, and Andrew Tomkins.
\newblock Structure and evolution of online social networks.
\newblock In {\em Proc. of the 12th ACM SIGKDD International Conference on
  Knowledge Discovery and Data Mining}, KDD '06, pages 611--617, New York, NY,
  USA, 2006. ACM.

\bibitem{KumarModel}
Ravi Kumar, Prabhakar Raghavan, Sridhar Rajagopalan, D.~Sivakumar, Andrew
  Tomkins, and Eli Upfal.
\newblock Stochastic models for the web graph.
\newblock In {\em Proc. of the 41st Annual Symposium on Foundations of Computer
  Science}, FOCS '00, pages 57--, Washington, DC, USA, 2000. IEEE Computer
  Society.

\bibitem{latapy_bipartite}
Matthieu Latapy, Cl\'emence Magnien, and Nathalie Del~Vecchio.
\newblock Basic notions for the analysis of large two-mode networks.
\newblock {\em Social Networks}, 30(1):31 -- 48, 2008.

\bibitem{lattanzi_affiliation}
Silvio Lattanzi and D.~Sivakumar.
\newblock Affiliation networks.
\newblock In {\em Proc. of the Forty-first Annual ACM Symposium on Theory of
  Computing}, STOC '09, pages 427--434, New York, NY, USA, 2009. ACM.

\bibitem{MGJ2009}
Kurt~T. Miller, Thomas~L. Griffiths, and Michael~I. Jordan.
\newblock Nonparametric latent feature models for link prediction.
\newblock In Yoshua Bengio, Dale Schuurmans, John~D. Lafferty, Christopher
  K.~I. Williams, and Aron Culotta, editors, {\em In NIPS}, pages 1276--1284.
  Curran Associates, Inc., 2009.

\bibitem{msh}
Morten M{\o}rup, Mikkel~N. Schmidt, and Lars~Kai Hansen.
\newblock Infinite multiple membership relational modeling for complex
  networks.
\newblock {\em CoRR}, abs/1101.5097, 2011.

\bibitem{palla-et-al}
Konstantina Palla, David~A. Knowles, and Zoubin Ghahramani.
\newblock {An Infinite Latent Attribute Model for Network Data}.
\newblock In {\em Proc. of the 29th International Conference on Machine
  Learning}, pages 1--8, 2012.

\bibitem{PIT}
Jim Pitman.
\newblock {\em Combinatorial stochastic processes}, volume 1875 of {\em Lecture
  Notes in Mathematics}.
\newblock Springer-Verlag, Berlin, 2006.

\bibitem{SCJ}
Purnamrita Sarkar, Deepayan Chakrabarti, and Michael~I. Jordan.
\newblock Nonparametric link prediction in dynamic networks.
\newblock In {\em Proc. of the 29th International Conference on Machine
  Learning}, pages 1--8, 2012.

\bibitem{blockmodel}
Tom~A.B. Snijders and Krzysztof Nowicki.
\newblock Estimation and prediction for stochastic blockmodels for graphs with
  latent block structure.
\newblock {\em Journal of Classification}, 14(1):75--100, 1997.

\bibitem{HyperANF}
Sadagopan Srinivasan, Krithi Ramamritham, Arun Kumar, M.P. Ravindra, Elisa
  Bertino, and Ravi Kumar, editors.
\newblock {\em HyperANF: approximating the neighbourhood function of very large
  graphs on a budget}. ACM, 2011.

\bibitem{TG}
Yee~W. Teh and Dilan Gorur.
\newblock Indian buffet processes with power-law behavior.
\newblock In Y.~Bengio, D.~Schuurmans, J.D. Lafferty, C.K.I. Williams, and
  A.~Culotta, editors, {\em Advances in Neural Information Processing Systems
  22}, pages 1838--1846. Curran Associates, Inc., 2009.

\bibitem{TJ}
Romain Thibaux and Michael~I. Jordan.
\newblock Hierarchical beta processes and the indian buffet process.
\newblock In {\em Proc. 11th Conference on Artificial Intelligence and
  Statistics (AISTAT)}, pages 1--8, Puerto Rico, 2007.

\bibitem{vigna-tau}
Sebastiano Vigna.
\newblock A weighted correlation index for rankings with ties.
\newblock {\em CoRR}, abs/1404.3325, 2014.

\end{thebibliography}

\end{document}